\documentclass[a4paper]{article}

% Set the output format to PDF.
\pdfoutput=1

\usepackage[T1]{fontenc}
\usepackage[utf8]{inputenc}
\usepackage[english]{babel}

\usepackage[margin=29mm]{geometry}

\usepackage{amsmath, amssymb, amsthm}
\usepackage{mathtools}
\usepackage{bm}
\usepackage{physics}

\usepackage[affil-it]{authblk}
\usepackage[font=footnotesize]{caption}
\usepackage{newfloat}
\usepackage{subcaption}
\usepackage{booktabs}
\usepackage{graphicx}
\usepackage{adjustbox}

\usepackage{csquotes}
\usepackage[style=phys,biblabel=brackets,chaptertitle=false,eprint,url]{biblatex}

\usepackage{tikz}

\usepackage[colorlinks]{hyperref}
\usepackage[capitalise]{cleveref}

% Set bibliography file location.
\addbibresource{refs.bib}

% Setup TikZ.
\usetikzlibrary{quantikz2}

% Setup theorem environments.
\newtheorem{theorem}{Theorem}

\newtheorem{lemma}[theorem]{Lemma}

\newtheorem{corollary}[theorem]{Corollary}

\theoremstyle{definition}
\newtheorem{definition}[theorem]{Definition}

\theoremstyle{remark}
\newtheorem*{remark}{Remark}

% Define some useful commands.
\newcommand{\bigO}{O}
\newcommand{\softO}{\tilde{O}}
\DeclareMathOperator{\Cov}{Cov}
\DeclareMathOperator{\Var}{Var}
\DeclareMathOperator{\Past}{Past}
\DeclareMathOperator{\IPast}{IPast}
\DeclareMathOperator{\Fut}{Fut}
\DeclareMathOperator{\IFut}{IFut}

% Define new floating environment for algorithms.
\DeclareFloatingEnvironment{algorithm}
\Crefname{algorithm}{\algorithmname}{\algorithmname s}

\title{Benincasa--Dowker--Glaser causal set actions \\ by quantum counting}

\author[ \hspace{-1ex}]{Sean A. Adamson\thanks{\href{mailto:sean.adamson@ed.ac.uk}{\texttt{sean.adamson@ed.ac.uk}}}}
\author[ \hspace{-1ex}]{Petros Wallden\thanks{\href{mailto:petros.wallden@ed.ac.uk}{\texttt{petros.wallden@ed.ac.uk}}}}
\affil[ \hspace{-1ex}]{School of Informatics, University of Edinburgh, \protect\\ 10 Crichton Street, Edinburgh EH8 9AB, United Kingdom}

\date{}

\begin{document}

\maketitle

\begin{abstract}
Causal set theory is an approach to quantum gravity in which spacetime is fundamentally discrete while retaining local Lorentz invariance. The Benincasa–Dowker–Glaser action is the causal set equivalent to the Einstein–Hilbert action underpinning Einstein's general theory of relativity. We present a $\tilde{O}(n^{2})$ running-time quantum algorithm to compute the Benincasa–Dowker–Glaser action in arbitrary spacetime dimensions for causal sets with $n$ elements which is asymptotically optimal and offers a polynomial speedup compared to all known classical or quantum algorithms. To do this, we prepare a uniform superposition over an $O(n^{2})$-size arbitrary subset of computational basis states encoding the classical description of a causal set of interest. We then construct depth $\tilde{O}(n)$ oracle circuits testing for different discrete volumes between pairs of causal set elements. Repeatedly performing a two-stage variant of quantum counting using these oracles yields the desired algorithm.
\end{abstract}

\section{Introduction}

Einstein's general theory of relativity, which describes gravity as a geometric property of spacetime, has proven to be one of the most successful physical theories for describing our universe at macroscopic scales \cite{clemence1947relativity,fomalont2009progress,abbott2016observation}.
However, its incompatibility with our best theory of microscopic scales, quantum mechanics, is well known.
It is therefore important to find a theory that is capable of describing scenarios where neither quantum nor relativistic effects may be ignored (such as near black holes or in the early universe).
The theory of \emph{causal sets} originally proposed by \textcite{bombelli1987space}, which considers the causal order of spacetime to be discrete and fundamental, is a candidate theory of quantum gravity on which such a description could possibly be built.
One of the most striking phenomenological aspects of causal sets was the correct prediction by \textcite{sorkin1990spacetime,sorkin1997forks} of the order of magnitude of the cosmological constant, which was later experimentally observed.

In general, an action has dual use---giving rise to the behavior of a system and as a means to quantize a theory by a path integral (sum over histories) approach.
The (classical) equations of motion for a physical system can be obtained from an action principle: The path taken by a physical system through its state space can be explained as being a stationary value of the system's action functional.
In general relativity, this role is played by the Einstein--Hilbert action.
The celebrated Einstein field equations \cite{einstein1916grundlage} were formulated by David Hilbert using the principle of least action applied to this functional \cite{hilbert1915grundlagen}.
In quantum mechanical theories, every possible path through state space contributes to the dynamics of the system.
The amplitude of each path (or ``history'') has its phase determined by the action.
The Einstein--Hilbert action is also used in approaches to quantize gravity \cite{hartle1983wave}.

It is therefore natural to seek a discrete action on causal sets whose expectation converges to the Einstein--Hilbert action in the continuum limit.
The main candidate for such a discrete action is the Benincasa--Dowker--Glaser (BDG) action \cite{benincasa2010scalar}, for which there is both analytical and numerical evidence that it converges in expectation up to some boundary terms \cite{benincasa2010scalar,surya2012evidence,benincasa2013action,dowker2021boundary,dowker2013causal}.
Causal sets are the only mathematical structures that are discrete and Lorentz invariant \cite{dowker2004quantum,dowker2021recovering}.
Therefore, the BDG action can be of phenomenological use for any fundamental theory that at some scale appears to satisfy these properties.
Like the path integral (sum over histories) formulation of quantum mechanics, causal set actions can be used to construct discrete quantum dynamics by acting as weights in sums over the sample space of causal sets in a ``sum over causal sets'' approach.
The development of the correct dynamics for causal sets is ongoing \cite{loomis2017suppression}.

\subsection{Our contributions}

Our main contribution is a quantum algorithm to estimate the value of the $d$-dimensional Benincasa--Dowker--Glaser causal set action up to an $\bigO(\varepsilon n)$ error in $\softO(n^{2} \varepsilon^{-1})$ time.
This is an improvement over classical algorithms that can be used, which have time complexities $\bigO(n^{3})$ or $\bigO(n^{2.8074})$ (see \cref{sec:matrix_multiplication} for the latter) depending on available memory ($\softO(1)$ for the former and $\softO(n^{2})$ for the latter) and acceptable constant factors in running time.
The quantum algorithm is described in \cref{alg:bd_quantum_algorithm}.
The statement of our main result found in \cref{thm:bd_quantum_algorithm} formulates the accuracy and resource requirements more precisely.

\begin{algorithm}[htb]
    \caption{
        Estimating the Benincasa--Dowker--Glaser causal set action for $d$ spacetime dimensions.
    }
    \label{alg:bd_quantum_algorithm}
    \rule{\linewidth}{0.08em}

    \begin{enumerate}
        \item For each $k \in \{0, \dots, n_{d} - 1\}$, where $n_{d} = \frac{d}{2} + 2$ for even dimension and $n_{d} = \frac{d-1}{2} + 2$ for odd dimension, perform the following steps to estimate order interval abundance $N_{k}$.
        \begin{enumerate}
            \item \textbf{Initial superposition.}
            Prepare an initial $\lceil 2 \log_{2}{n} \rceil$-qubit uniform superposition over the first $n^{2}$ basis states
            \begin{equation}
                \frac{1}{n} \sum_{j=0}^{n^2 - 1} \ket{j} .
            \end{equation}
            Achievable in $\bigO(\log{n})$ time using an efficient circuit (e.g. \textcite{shukla2024efficient}).
    
            \item \textbf{Data superposition.}
            Apply a sequence of $X_{t}^{c}$ gates as described in \cref{sec:data_prep} to produce the state with $\lceil 2 \log_{2}{n} \rceil + 2n$ qubits
            \begin{equation}
                \frac{1}{n} \sum_{i,j=1}^{n} \ket{h(i,j)} \otimes \ket{\bm{r}_{i} \bm{c}_{j}} ,
            \end{equation}
            where the $\bm{r}_{i}$ and $\bm{c}_{j}$ are the $n$-bit binary strings forming rows and columns of the $n \times n$ reflexive adjacency matrix of the causal set.
            Achievable in $\bigO(n^{2} \log{n})$ time (\cref{cor:uniform_superposition}).
    
            \item \textbf{Abundance counting.}
            Perform the two-stage quantum counting variant of \cref{sec:abundance_counting_alg} using the depth $\softO(n)$ oracle circuit $V_{k}$ given in \cref{fig:bd_oracle} to estimate $N_{k}$ by $\hat{N}_{k}$.
            Achievable in $\softO(n^{2})$ time (since $\bigO(n)$ oracle queries are made).
        \end{enumerate}

        \item Input all the $\hat{N}_{k}$ into the action formula of \cref{def:bd_action_d} to obtain an estimate $\hat{S}^{(d)}$ for $S^{(d)}$.
    \end{enumerate}

    \rule{\linewidth}{0.08em}
\end{algorithm}

We also analyze (\cref{sec:random_sampling}) a classical method for approximately counting order-interval abundances (and thus computing the BDG action) based on simple random sampling of pairs of elements.
In order to obtain the correct error scaling in this approximation, we find that the number of pairs sampled must be at least a constant multiple of the total number of adjacent elements in the causal set, which is $\bigO(n^{2})$ in general.
Although this does not offer an asymptotic improvement in the running time over the naive $\bigO(n^{3})$ time classical method, there is an improved constant factor that comes at the cost of its output being an approximation.

\subsection{Overview of techniques}

The basis of our algorithm is to use quantum counting \cite{brassard1998quantum,brassard2002quantum} to efficiently count the abundance of order intervals of a particular size in a causal set.
This counting is performed only approximately, so it is important to ensure that as the number of causal set elements $n$ being considered grows, the error in the approximation increases at most linearly in $n$.
To this end, in \cref{sec:abundance_counting_alg} we adapt the simplified quantum counting algorithm of \textcite{aaronson2020quantum} (which makes no use of the quantum Fourier transform and relies solely on Grover iterations) to perform two stages: an initial rough estimate of the count which is used to feedback and tune the target error parameter, followed by a final count using this updated parameter.
This results in an estimate with an error that is at most proportional to the square root of the true count (\cref{thm:approx_counting_sqrt_err}).

In order to apply the counting algorithm successfully, we require two other components: Efficient data preparation and an efficient quantum oracle.
For our algorithm, data preparation means the preparation of a uniform superposition state over binary strings of qubits representing all row and column pairs of the adjacency matrix for the (reflexive) causal set.
To achieve this, we first prepare a uniform superposition over the correct number $\bigO(n^{2})$ of states (corresponding to elements of an $n \times n$ matrix) inside a minimal-dimensional quantum register using the recent algorithm of \textcite{shukla2024efficient}.
We apply a circuit that uses the elements of this superposition to enumerate states in a second, larger register; mapping each to strings of qubits representing the row/column data.
For this, we assume that the classical description of the causal set is provided in a form that, for any given element, allows constant-time lookup of the set of all elements to its future and to its past.

We construct a family of quantum oracle circuits (parametrized by an integer $k \geq 0$) that, when acting on classical input states representing the adjacency data for some pair of causal set elements, determines whether or not the volume of their inclusive order interval takes a particular given value ($k+2$), which we may choose.
As this is a quantum circuit, it can be applied to the uniform superposition containing all of the causal set adjacency data for pairs of elements that we already prepared.
As in many quantum algorithms, exploiting this superposition ``parallelism'' forms the origin of the quantum speedup exhibited by the counting algorithm.
Having an algorithm counting the abundance of order intervals of any given particular volume is enough, as we only need to repeat this for a fixed number of different $k$ values to find the BDG action (with the number depending on the spacetime dimension being considered).
That the oracle circuits are efficient enough for our purposes follows from recent developments in the depths required to compile general multi-controlled Toffoli gates into single-qubit and controlled-\textsc{NOT} quantum gates with minimal space overhead \cite{claudon2024polylogarithmic}.
Any further advancements in this area could immediately be applied to improve the resource complexities of our algorithm.

\subsection{Related works}

The first explicit quantization of the causal structure of spacetime was arguably the ``causal spaces'' defined axiomatically by \textcite{kronheimer1967structure}.
Causal set theory was proposed by \textcite{bombelli1987space} based on a theorem of \textcite{hawking1976new} and its generalization by \textcite{malament1977class}.
The discrete BDG action that (under reasonable assumptions) has as its continuum limit the Einstein--Hilbert action was developed by \textcite{benincasa2010scalar} (for two- and four-dimensional spacetimes) and generalized to arbitrary dimensions by \textcite{dowker2013causal}.
Numerical simulations and heuristic arguments suggest the validity of the BDG action as the Einstein--Hilbert action in the continuum limit up to some boundary terms \cite{benincasa2013action,dowker2021boundary,dowker2013causal}.
Closed-form expressions for the action in arbitrary dimensions were derived by \textcite{glaser2014closed}.
Recent work of \textcite{yeats2025combinatorial} has shown that the coefficients can also be obtained combinatorially via chord diagrams.
\textcite{glaser2013towards} derived analytic expressions for the expectation values of order-interval abundances (which are quantities used to define the BDG action) in flat spacetime.
Optimizations adapted to classical numerical techniques to compute the BDG action have been explored \cite{cunningham2018high}.
Other attempts have also been made to define the action of a causal set \cite{sverdlov2009gravity,roy2013discrete}.
For further details on causal set theory, see the excellent review article by \textcite{surya2019causal}.

A quantum algorithm for the unstructured search problem (identifying the unique marked item from a black-box function that produces a particular output value) was given by \textcite{grover1996fast}.
This seminal algorithm offers a quadratic improvement in query complexity over any possible classical method and is provably optimal in the circuit model of quantum computing \cite{bennett1997strengths}.
The idea was generalized by \textcite{brassard1998quantum,brassard2002quantum} to the scenario in which there are multiple marked items and was also used to count them, showing the same quantum speedup.
A simplified version of this \emph{quantum counting} that does not make use of the quantum Fourier transform (and relies only on Grover iterations) was given by \textcite{aaronson2020quantum}.
\textcite{shukla2024efficient} showed an efficient algorithm to prepare a uniform superposition of an arbitrary number of states, improving previous methods \cite{babbush2018encoding}.
We make use of fan-out gates to expand the number of target qubits of a controlled gate, and note that bounded fan-out gates are almost within reach of quantum hardware \cite{rasmussen2020single,guo2022implementing,kim2022high,fenner2023implementing}.
Quantum circuits for compiling multi-controlled generalized Toffoli gates in polylogarithmic depth using only a single ancilla qubit were found by \textcite{claudon2024polylogarithmic}.
\textcite{khattar2025rise} introduced ``conditionally clean'' ancilla qubits, which can be used to reduce the space overhead of circuits including generalized Toffoli and related gates such as incrementers and comparators.
Although no quantum algorithm for general $n \times n$ matrix multiplication beating the current classical time complexity record of $\bigO(n^{2.371339})$ \cite{alman2025more} is known, in the special cases over the Boolean semiring with sparse input or output matrices polynomial speedups are known \cite{jeffery2012improving,legall2014quantum}.
On the other hand, matrix products can be classically verified in $\bigO(n^{2})$ time \cite{freivalds1979fast}.
In the quantum case, there is an $\bigO(n^{5/3})$ time quantum algorithm by \textcite{buhrman2006quantum}.
For further details about quantum computing and quantum information, see, for example, \cite{nielsen2010quantum}.

\subsection{Organization of the paper}

In \cref{sec:preliminaries} we introduce causal sets, describe the Benincasa--Dowker--Glaser causal set action, and give some preliminary results on quantum circuits and the quantum counting algorithm that we later make use of.
In \cref{sec:problem} we describe the problem of algorithmically computing the BDG action and introduce a possible approach to this based on the adjacency matrix of the causal set.
In \cref{sec:classical_approaches} we discuss two exact algorithms for the problem (the basic naive approach and a classical matrix approach), followed by a brief analysis of an approximate classical method that randomly samples pairs of causal set elements.
\Cref{sec:quantum_approach} details our quantum approach based on quantum counting: We show how to prepare the correct superposition state, design a quantum oracle circuit suited to our purpose, and exhibit a variant of quantum counting suitable for our purposes.
We analyze the relevant time and space complexities throughout before combining these components with quantum counting for our final algorithm.
In \cref{sec:discussion} we conclude with a discussion of our work and outline some possible future directions.

\section{Preliminaries}
\label{sec:preliminaries}

We introduce causal sets and the BDG action in \cref{sec:causal_sets}.
In \cref{sec:causal_graphs_matrices}, we explain the representation of causal sets using graphs and matrices.
In \cref{sec:uniform_superposition,sec:mcx}, we give results on generating quantum superposition states and on quantum gates relevant to our algorithm.
The quantum counting algorithm is stated in \cref{sec:quantum_counting}.

\subsection{Causal sets}
\label{sec:causal_sets}

A \emph{causal set} (or \emph{causet} for short) is a locally finite partially ordered set.

\begin{definition}[Causal set]
\label{def:causet}
    A \emph{causal set} is a set $C$ on which a homogeneous relation $\prec$ is defined that satisfies the following properties.
    \begin{enumerate}
        \item Irreflexivity: For all $x \in C$, we have not $x \prec x$.
        \item Asymmetry: For all $x, y \in C$, if $x \prec y$ then not $y \prec x$.
        \item Transitivity: For all $x, y, z \in C$, if $x \prec y$ and $y \prec z$ then $x \prec z$.
        \item Local finiteness: For all $x, z \in C$, we have that $\{ y \in C \mid x \prec y \text{ and } y \prec z \}$ is a finite set.
    \end{enumerate}
\end{definition}

\begin{remark}
    We have chosen to use the irreflexive (strict) definition of partial orders, but we could equivalently use the reflexive definition.
    We will denote by $\preceq$ the reflexive closure of $\prec$ given by
    \begin{equation}
        x \preceq y \text{ if } x \prec y \text{ or } x = y .
    \end{equation}
    We will also denote by $\succ$ and $\succeq$ the duals of $\prec$ and $\preceq$, respectively.
\end{remark}

\begin{definition}[Past and future sets]
    For all $x$ in a causal set $C$ we define the following sets of points in $C$.
    \begin{enumerate}
        \item The \emph{exclusive past} (or simply \emph{past}) of $x$ is $\Past(x) = \{ e \in C \mid e \prec x \}$.

        \item The \emph{exclusive future} (or simply \emph{future}) of $x$ is $\Fut(x) = \{ e \in C \mid x \prec e \}$.
    \end{enumerate}
    We also define \emph{inclusive} versions of these, which contain the point $x$ itself.
    \begin{enumerate}
        \item The \emph{inclusive past} of $x$ is $\IPast(x) = \{ e \in C \mid e \preceq x \}$.

        \item The \emph{inclusive future} of $x$ is $\IFut(x) = \{ e \in C \mid x \preceq e \}$.
    \end{enumerate}
\end{definition}

\begin{definition}[Order intervals]
    For any two points $x$ and $y$ in a causal set $C$, their (\emph{exclusive}) \emph{order interval} $I(x,y)$ is the set of points
    \begin{equation}
        I(x,y) = \Fut(x) \cap \Past(y) .
    \end{equation}
    Their \emph{inclusive} order interval is denoted by
    \begin{equation}
        I[x,y] = \IFut(x) \cap \IPast(y) .
    \end{equation}
    Similarly, we define \emph{half-open} order intervals
    \begin{subequations}
    \begin{align}
        I[x,y) & = \IFut(x) \cap \Past(y) , \\
        I(x,y] & = \Fut(x) \cap \IPast(y) .
    \end{align}
    \end{subequations}
    We may refer to the cardinality of an order interval $n(x,y) = \lvert I(x,y) \rvert$ as its \emph{discrete volume}, with a similar notation defined for the other types of intervals.
\end{definition}

\begin{definition}[Nearest neighbors]
    For any points $x$ and $y$ in a causal set $C$ satisfying $x \prec y$, we say that $x$ and $y$ are \emph{$k$-nearest neighbors} if and only if $\lvert I(x,y) \rvert = k$.
    We also define the set of all $k$-nearest neighbors to the past of $x$ by
    \begin{equation}
        L_{k}(x) = \{ e \in C \mid e \prec x \text{ and } \lvert I(e,x) \rvert = k \} .
    \end{equation}
\end{definition}
\begin{remark}
    The $0$-nearest neighbors are simply the \emph{nearest} neighbors, the $1$-nearest neighbors are the \emph{next nearest} neighbors, etc.
\end{remark}

We now define the Benincasa--Dowker--Glaser (BDG) causal set action \cite{benincasa2010scalar} for four-dimensional spacetimes and a discreteness length $l$.
First, we write a discrete version of the scalar curvature.

\begin{definition}[Scalar curvature, four dimensions]
    The discrete scalar curvature at a point $x$ in a causal set $C$ in four dimensions is defined as
    \begin{equation}
        R(x) = \frac{4}{\sqrt{6}} \frac{1}{l^{2}} [1 - N_{0}(x) + 9 N_{1}(x) - 16 N_{2}(x) + 8 N_{3}(x)] ,
    \end{equation}
    where $N_{k}(x) = \lvert L_{k}(x) \rvert$.
\end{definition}

Next, we can sum over all points to obtain a discrete action.

\begin{definition}[Benincasa--Dowker--Glaser action, four dimensions]
\label{def:bd_action_4}
    The Benincasa--Dowker--Glaser action of a finite causal set $C$ with $n$ elements in four dimensions is given by
    \begin{equation}
    \begin{split}
        \frac{1}{\hbar} S(C)
        & = \frac{l^{4}}{l_{p}^{2}} \sum_{x \in C} R(x) \\
        & = \frac{4}{\sqrt{6}} \mathopen{}\left( \frac{l}{l_{p}} \right)^{2} [n - N_{0} + 9 N_{1} - 16 N_{2} + 8 N_{3}] ,
    \end{split}
    \end{equation}
    where $N_{k} = \sum_{x \in C} N_{k}(x)$ is the total number of $k$-nearest neighbor pairs in $C$.
\end{definition}

Similarly, the action can also be generalized to $d$ dimensions.

\begin{definition}[Benincasa--Dowker--Glaser action, $d$ dimensions]
\label{def:bd_action_d}
    The Benincasa--Dowker--Glaser action of a finite causal set $C$ with $n$ elements in $d$ dimensions is given by
    \begin{equation}
        \frac{1}{\hbar} S^{(d)}(C) = - \alpha_{d} \mathopen{}\left( \frac{l}{l_{p}} \right)^{d-2} \mathopen{}\left[ n + \frac{\beta_{d}}{\alpha_{d}} \sum_{k=0}^{n_{d} - 1} C_{k+1}^{(d)} N_{k} \right]\mathclose{} ,
    \end{equation}
    where $N_{k}$ is the total number of $k$-nearest neighbor pairs in $C$, the number of layers to be summed over is $n_{d} = \frac{d}{2} + 2$ for even dimensions and $n_{d} = \frac{d-1}{2} + 2$ for odd dimensions, and $\alpha_{d}$, $\beta_{d}$, and $C_{k}^{(d)}$ have closed-form expressions \cite{dowker2013causal,glaser2014closed}.
\end{definition}

In \cref{def:bd_action_4,def:bd_action_d} for the BDG action, the quantities $N_{k}$ denote the total numbers of $k$-nearest neighbor pairs in the causal set $C$ under consideration.
In the case of $N_{0}$, this is not quite the same as the number of order intervals such that $\lvert I(x,y) \rvert = 0$ since this is also satisfied when $x \nprec y$.
In order to distinguish between the cases of nearest neighbors and $x \nprec y$ in terms of discrete volumes, it is natural to instead use \emph{inclusive} order intervals.
We now rewrite the $k$-nearest neighbor \emph{abundances} in terms of order intervals.

\begin{definition}[Abundances]
\label{def:abundances}
    For a causal set $C$, the \emph{$k$-abundance} of $k$-nearest neighbor pairs is
    \begin{equation}
    \label{eq:abundance}
        N_{k} = \lvert \{ (x,y) \in C^{2} \mid n[x,y] = k + 2 \} \rvert ,
    \end{equation}
    where $n[x,y] = \lvert I[x,y] \rvert$ is the inclusive discrete volume between points $x$ and $y$.
\end{definition}

\subsection{Causal sets as graphs and matrices}
\label{sec:causal_graphs_matrices}

In the following, we restrict our discussion to finite causal sets, as it is for these that the BDG action is defined.
Thus, local finiteness is automatically satisfied, and we simply deal with finite partially ordered sets.

Any irreflexive partially ordered set $(C, \prec)$ can equivalently be expressed as a directed acyclic graph (DAG) denoted $G = (V,E)$.
The graph is constructed by taking each element of $C$ to be a vertex and each element of $\prec$ to be an edge ($V = C$ and $E = {\prec}$).
DAGs constructed in this way are automatically transitively closed and have the initial partial order as their reachability relation.
That these graphs are acyclic is equivalent to the asymmetry property of partial orders (or, in the language of causal sets, that there are no closed causal curves).
Conversely, any DAG (such as those represented by Hasse diagrams) has a reachability relation that is a strict partial order and that can be found directly from the transitive closure of the DAG.

The directed graph associated with a \emph{reflexive} partially ordered set is not acyclic, since it has loops connecting every vertex to itself.
It is for this reason that we chose to define causal sets using the irreflexive convention in \cref{def:causet}.

It is convenient to enumerate the elements of a causal set (or the vertices of its graph) by means of labeling each by a natural number.
For a partially ordered set $(C, \prec)$ of $n$ elements, if this enumeration is given by the bijection $f \colon \{ 1, \dots, n \} \to C$, we write $i \prec j$ to mean $f(i) \prec f(j)$.
Since the graph for a partially ordered set is acyclic, it is moreover always possible to enumerate its elements such that the usual total order on the index lower set of natural numbers describes a topological ordering of the graph.
In the language of order theory: There exists a linear extension to the partial order.
We use the term \emph{topological ordering} to refer to enumerations representing total orders of the vertices of directed acyclic graphs, while the (more general) term \emph{linear extension} is reserved for compatible total orders when considering partial orders themselves (which, unlike those for graphs, may not be finite in general).

\begin{theorem}[Topological ordering]
    For any directed acyclic graph $G = (V,E)$, there exists a topological ordering of $G$.
    That is, there exists an enumeration $f \colon \{ 1, \dots, \lvert V \rvert \} \to V$ such that for every edge $(x,y) \in E$ we have $f^{-1}(x) < f^{-1}(y)$.
\end{theorem}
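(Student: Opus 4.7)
The plan is to prove this classical fact by induction on the number of vertices $|V|$. The base case $|V| = 0$ is trivial, since the empty function vacuously satisfies the topological ordering condition. For the inductive step, I would rely on an auxiliary claim that every nonempty finite DAG contains at least one \emph{source} --- a vertex with no incoming edges. Granted this, I would pick such a source $v$, set $f(1) = v$, and apply the induction hypothesis to the subgraph $G' = (V \setminus \{v\}, E')$ obtained by deleting $v$ and every edge incident to it. This $G'$ is still a DAG, being a vertex-induced subgraph of an acyclic graph, and has $|V| - 1$ vertices, so the inductive hypothesis yields a topological ordering $g$ of $G'$. Defining $f(k+1) = g(k)$ for $k \geq 1$ completes the construction: for any edge $(x,y) \in E$, either $x = v$, in which case $f^{-1}(x) = 1 < f^{-1}(y)$, or the edge lies entirely in $G'$, in which case the inequality $f^{-1}(x) < f^{-1}(y)$ follows from the hypothesis applied to $g$.

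To establish the auxiliary claim, I would argue by contradiction. Assume every vertex of a nonempty finite DAG $G$ has at least one incoming edge. Starting from any vertex $x_{0}$, iteratively choose $x_{i+1}$ so that $(x_{i+1}, x_{i}) \in E$, which is always possible by the assumption. Since $V$ is finite, the pigeonhole principle forces the sequence $x_{0}, x_{1}, x_{2}, \dots$ to repeat: there exist indices $i < j$ with $x_{i} = x_{j}$. The edges $(x_{i+1}, x_{i}), (x_{i+2}, x_{i+1}), \dots, (x_{j}, x_{j-1})$ then close up into a directed cycle in $G$, contradicting acyclicity.

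The main obstacle here is not mathematical depth --- this is a well-known result --- but rather presentational care: one must ensure that the inductive construction is defined cleanly and that the source-existence argument does not tacitly assume more structure than is available. In particular, the proof nowhere requires the DAG to be transitively closed, so the result applies uniformly to both the transitive closures used earlier in the paper to represent causal sets and to their Hasse diagram representations, as needed for the labeling conventions introduced in \cref{sec:causal_graphs_matrices}.
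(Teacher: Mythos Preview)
Your proof is correct and complete. The paper, however, does not supply its own proof of this theorem: it is stated as a well-known classical result and immediately followed only by a remark citing Kahn's algorithm and depth-first-search methods for \emph{computing} a topological ordering in $\bigO(\lvert V \rvert + \lvert E \rvert)$ time. Your inductive argument is essentially the correctness proof underlying Kahn's algorithm (repeatedly remove a source), so it aligns naturally with what the paper gestures at but leaves implicit.
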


\begin{remark}
    Standard algorithms for finding topological orderings (Kahn's algorithm \cite{kahn1962topological} and those based on depth-first search \cite{tarjan1976edge}) have time complexities $\bigO(\lvert V \rvert + \lvert E \rvert)$ and space complexities $O(\lvert V \rvert)$.
\end{remark}

We may now assume that the elements of our sets are enumerated (although it will be unnecessary to assume a topological ordering unless otherwise stated).
This makes available to us a representation of causal sets as matrices.

\begin{definition}[Adjacency matrix, directed graphs]
    The \emph{adjacency matrix} $\bm A$ of a directed graph $G = (V,E)$ with $n = \lvert V \rvert$ vertices is the $n \times n$ logical matrix defined by
    \begin{equation}
        A_{ij} =
        \begin{cases}
            1 & \text{if $(i,j) \in E$,} \\
            0 & \text{otherwise.}
        \end{cases}
    \end{equation}
\end{definition}

\begin{definition}[Adjacency matrix, partially ordered sets]
    The \emph{adjacency matrix} $\bm A$ of a finite partially ordered set $(C, \prec)$ with $n$ elements is the adjacency matrix of its corresponding transitively closed directed graph.
    Concretely, it is the $n \times n$ logical matrix defined by
    \begin{equation}
        A_{ij} =
        \begin{cases}
            1 & \text{if $i \prec j$,} \\
            0 & \text{otherwise.}
        \end{cases}
    \end{equation}
\end{definition}

\begin{remark}
    Since $\prec$ is irreflexive, this matrix has diagonal elements $A_{ii} = 0$.
    Moreover, if $C$ is enumerated in topological order then the matrix is upper triangular.
\end{remark}

\subsection{Quantum computation}
\label{sec:quantum_computation}

In quantum computation, qubits (quantum bits) are two-dimensional quantum systems acting analogously to the binary bits of classical computation.
The choice of two orthonormal states, denoted by $\ket{0}$ and $\ket{1}$, together forms the standard basis (called the \emph{computational} basis) of a qubit system $\mathbb{C}^{2}$.
Qubits may also be represented by complex matrices, and in this representation the basis states are such that
\begin{equation}
    \ket{0} \equiv
    \begin{pmatrix}
        1 \\ 0
    \end{pmatrix} ,\quad
    \ket{1} \equiv
    \begin{pmatrix}
        0 \\ 1
    \end{pmatrix} .
\end{equation}
Many qubits taken together (as a tensor product) form what is known as a quantum \emph{register}, whose standard basis states (also called computational) are typically denoted in the form $\ket{x}$, where the $x$ are binary strings (and sometimes denoted by their corresponding digits).
For example, the computational basis of a two-qubit system is written as $\{ \ket{00}, \ket{01}, \ket{10}, \ket{11} \}$.
The vectors comprising this basis can again be represented by matrices such that
\begin{equation}
    \ket{00} \equiv
    \begin{pmatrix}
        1 \\ 0 \\ 0 \\ 0
    \end{pmatrix} ,\quad
    \ket{01} \equiv
    \begin{pmatrix}
        0 \\ 1 \\ 0 \\ 0
    \end{pmatrix} ,\quad
    \ket{10} \equiv
    \begin{pmatrix}
        0 \\ 0 \\ 1 \\ 0
    \end{pmatrix} ,\quad
    \ket{11} \equiv
    \begin{pmatrix}
        0 \\ 0 \\ 0 \\ 1
    \end{pmatrix} .
\end{equation}
In general, the state of an $n$-qubit quantum register is represented by a unit vector in a $2^{n}$-dimensional complex Hilbert space.

A quantum computer is a device that is able to manipulate a quantum register by applying unitary operators (called quantum logic gates in this context) after it has been prepared in some initial state (conventionally taken to be $\ket{0 \dots 0}$).
Some frequently used quantum gates are the Hadamard gate $H$, the Pauli-$X$ gate, and the controlled \textsc{NOT} gate \textsc{CNOT}.
In terms of their matrix representations,
\begin{equation}
    H \equiv \frac{1}{\sqrt{2}}
    \begin{pmatrix}
        1 & 1 \\
        1 & -1
    \end{pmatrix} ,\quad
    X \equiv
    \begin{pmatrix}
        0 & 1 \\
        1 & 0
    \end{pmatrix} ,\quad
    \textsc{CNOT} =
    \begin{pmatrix}
        1 & 0 & 0 & 0 \\
        0 & 1 & 0 & 0 \\
        0 & 0 & 0 & 1 \\
        0 & 0 & 1 & 0
    \end{pmatrix} .
\end{equation}
It is typical to consider devices that may only utilize a certain finite set of gates, a finite sequence of which (called a \emph{quantum circuit}) can be made to arbitrarily approximate any $n$-qubit unitary operator.
Such a set of quantum gates is called \emph{universal} for quantum computation, and the approximation can be done remarkably efficiently \cite{kitaev1997quantum,dawson2006solovay}.
After applying the desired quantum circuit, the qubit register can be measured in the computational basis to obtain the final result of the quantum computation.
One example of a universal quantum gate set is the Hadamard gate together with the Toffoli gate \textsc{CCNOT} (also called the controlled controlled \textsc{NOT} gate),
\begin{equation}
    \textsc{CCNOT} =
    \begin{pmatrix}
        I_{6} & 0 \\
        0 & X
    \end{pmatrix} .
\end{equation}

The resources required to implement a quantum computation (with respect to some choice of quantum gate set) can be quantified according to a number of figures of merit.
\begin{itemize}
    \item
    Quantum circuit \emph{depth} quantifies the number of time steps from the beginning to the end of a quantum circuit implementing the desired computation, counting gates that can be executed in parallel as belonging to the same step.
    This quantity allows us to compare the amount of physical time required to perform different computations.

    \item
    Quantum circuit \emph{width} is the maximum number of qubits used at any given point between the beginning and end of a quantum circuit.
    This quantity corresponds to the amount of space (quantum register size) required for a computation.
\end{itemize}
Denoting by $n$ a quantity parametrizing a computational problem (in our case $n$ is the number of causal set elements), we often talk about the required computational resources in terms of their computational \emph{complexities}.
That is, we use big $\bigO$ notation to express a worst-case bound on how these resources scale with respect to $n$ as $n$ tends to infinity.
We interchangeably refer to the quantum circuit depth complexity as the time complexity and the quantum width complexity as the space complexity.
We refer to a time complexity that is $\bigO(1)$ as \emph{constant} time.

\subsection{Uniform superposition states}
\label{sec:uniform_superposition}

In order to encode our classical causal set data for quantum computation, we need to prepare an equal superposition of an arbitrary number of $m$-qubit computational basis states.
Due to an algorithm of \textcite{shukla2024efficient}, the first $M \leq 2^{m}$ such basis states can be efficiently prepared in equal superposition
\begin{equation}
    \frac{1}{\sqrt{M}} \sum_{j=0}^{M-1} \ket{j} .
\end{equation}
The circuit depth is $\bigO(\log{M})$ and only $\lceil \log_{2}{M} \rceil$ qubits (the smallest number to encode $M$ basis states possible) are required.

\subsection{Multi-controlled \textsc{NOT} gates}
\label{sec:mcx}

We make frequent use of $m$-controlled \textsc{NOT} gates $\textsc{C}^{m}\textsc{NOT}$.
Their action is to flip a single computational basis qubit if and only if all $m$ control qubits are excited
\begin{equation}
    \textsc{C}^{m}\textsc{NOT} = \dyad{1 \dots 1} \otimes \textsc{NOT} + (I - \dyad{1 \dots 1}) \otimes I .
\end{equation}
These $m$-controlled \textsc{NOT} gates can be exactly decomposed into primitive single-qubit and \textsc{CNOT} gates with polylogarithmic $\Theta(\log(m)^{3})$ circuit depth and a single (borrowed) ancilla qubit \cite{claudon2024polylogarithmic}.
Alternatively, if $m-2$ ancilla qubits are available rather than just one, then it is simple to implement $\textsc{C}^{m}\textsc{NOT}$ with depth $\bigO(\log{m})$.
One can use $m-1$ Toffoli gates to perform a cascaded \textsc{AND} operation starting on the $m$ control qubits, performing individual \textsc{AND}s on disjoint pairs of qubits and storing the final result in the target qubit, before uncomputing using another $m-2$ Toffoli gates.
This is illustrated in \cref{fig:simple_mcx} with $m=4$.

\begin{figure}[htb]
    \centering
    \begin{equation*}
        \begin{quantikz}
            \lstick[4]{$\mathcal{C}$} & \ctrl{4} & \\
            & \control{} & \\
            & \control{} & \\
            & \control{} & \\
            \lstick{$\mathcal{T}$} & \targ{} &
        \end{quantikz}
        =
        \begin{quantikz}
            \lstick[4]{$\mathcal{C}$} & \ctrl{4}\gategroup[6, steps=2, style={dashed, red, rounded corners}]{} & & & \gategroup[6, steps=2, style={dashed, red, rounded corners}]{} & \ctrl{4} & \\
            & \control{} & & & & \control{} & \\
            & & \ctrl{3} & & \ctrl{3} & & \\
            & & \control{} & & \control{} & & \\
            \lstick[2]{$\ket{0}$} & \targ{} & & \ctrl{2} & & \targ{} & \rstick[2]{$\ket{0}$} \\
            & & \targ{} & \control{} & \targ{} & & \\
            \lstick{$\mathcal{T}$} & & & \targ{} & & &
        \end{quantikz}
    \end{equation*}
    \caption{
        Decomposition of the $\textsc{C}^{4}\textsc{NOT}$ gate into Toffoli gates using two ancilla qubits prepared as $\ket{0}$.
        Control and target qubit registers are labeled $\mathcal{C}$ and $\mathcal{T}$, respectively.
        Gates grouped into dashed red boxes can be performed simultaneously.
        This grouping is the origin of the $\bigO(\log{m})$ depth for general $\textsc{C}^{m}\textsc{NOT}$ gates using this decomposition.
    }
    \label{fig:simple_mcx}
\end{figure}
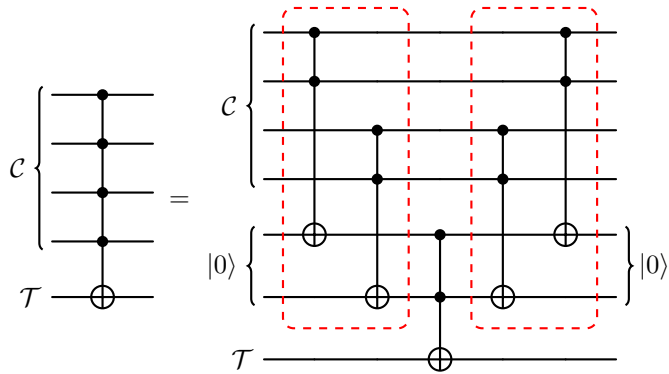

We also require a further generalization to $m$-controlled $n$-target \textsc{NOT} gates, which we denote by $\textsc{C}^{m}\textsc{NOT}^{n}$.
These gates are defined in the computational basis to flip all $n$ qubits in the target register if and only if all $m$ control qubits are excited.
A $\textsc{C}^{m}\textsc{NOT}^{n}$ gate can be implemented with two $\textsc{C}^{m}\textsc{NOT}$ gates and an $n$-target \textsc{CNOT} gate (also called fan-out gate) using an ancilla qubit, as shown in \cref{fig:mcmx}.
The circuit depth of this then depends on the implementation of the $\textsc{C}^{m}\textsc{NOT}$ gates used and the depth of fan-out.
An $n$-target fan-out gate can be compiled with depth $\Theta(\log{n})$ \cite{green2002counting,broadbent2009parallelizing}.
Therefore, with a total of two ancilla qubits available, the depth is $\bigO(\log(m)^{3} + \log{n})$; while with $m-1$ ancilla qubits available, the depth can be reduced to $\bigO(\log(mn))$.

\begin{figure}[htb]
    \centering
    \begin{equation*}
        \begin{quantikz}
            \lstick{$\mathcal{C}$} & \qwbundle{m} & \ctrl{1} & \\
            \lstick{$\mathcal{T}$} & \qwbundle{n} & \targ{} &
        \end{quantikz}
        =
        \begin{quantikz}
            \lstick{$\mathcal{C}$} & \qwbundle{m} & \ctrl{1} & & \ctrl{1} & \\
            \lstick{$\ket{0}$} & & \targ{} & \ctrl{1} & \targ{} & \rstick{$\ket{0}$} \\
            \lstick{$\mathcal{T}$} & \qwbundle{n} & & \targ{} & &
        \end{quantikz}
    \end{equation*}
    \caption{
        The $\textsc{C}^{m}\textsc{NOT}^{n}$ $m$-controlled $n$-target \textsc{NOT} gate expressed in terms of $\textsc{C}^{m}\textsc{NOT}$ gates and an $n$-target \textsc{NOT} gate using one additional ancilla qubit prepared as $\ket{0}$.
        Control and target qubit registers are labeled $\mathcal{C}$ and $\mathcal{T}$, respectively.
    }
    \label{fig:mcmx}
\end{figure}
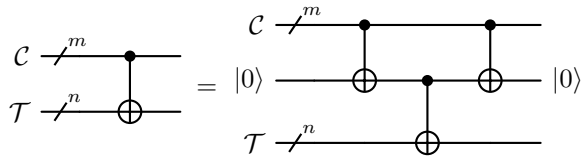

\subsection{Quantum oracles and quantum counting}
\label{sec:quantum_counting}

Consider some function $f \colon \{0, \dots, N-1\} \to \{0, 1\}$ whose role is to \emph{mark} $K = \lvert f^{-1}(1) \rvert > 0$ from $N$ items with a ``$1$'', and otherwise assign a ``$0$''.
A \emph{membership oracle} for $f$ refers to the unitary operator $U_{f}$ satisfying $U_{f} \ket{x} = (-1)^{f(x)} \ket{x}$.
That is, to any computational basis state $\ket{x}$, the membership oracle for $f$ applies a multiplicative phase of $-1$ if and only if $f(x) = 1$ (the state is marked).
There may be many different quantum circuit representations of a given membership oracle $U_{f}$, and we refer to any such circuit implementing $U_{f}$ as a \emph{quantum oracle circuit} for $f$.
A seminal algorithm due to \textcite{grover1996fast} achieves a quadratic improvement in time complexity over best possible classical case.
Moreover, this quantum speedup is asymptotically tight \cite{bennett1997strengths}.

In general, implementations of Grover's algorithm for unstructured search problems (starting with classical input data) take the form of the following steps.
\begin{enumerate}
    \item Compute a classical description of the function as a quantum oracle circuit.
    \item Prepare a uniform superposition of the $N$ particular states we want to search over.
    \item Run the Grover search algorithm:
    \begin{enumerate}
        \item Grover iterations: Repeatedly apply the oracle circuit and \emph{Grover diffusion} operator.
        \item Measure the output state in the computational basis.
    \end{enumerate}
\end{enumerate}
If it is known that there are $K$ matching states, then only $\bigO(\sqrt{N/K})$ Grover iterations are required to measure a correct answer with high probability.
However, it is also important for practical implementations that the time taken for classical computation of the appropriate quantum oracle, preparation of the correct superposition of input states, and each application of the quantum oracle circuit do not negate the quantum speedup gathered from needing only few Grover iterations.
If the number of solutions $K$ is unknown, we can employ the closely related \emph{quantum counting} algorithms to find it \cite{brassard1998quantum,brassard2002quantum,aaronson2020quantum}, again with the same concerns for practical implementations.

The following result of \textcite{aaronson2020quantum} regards a simplified approximate quantum counting algorithm that makes use of only Grover iterations (rather than relying on the quantum Fourier transform as in the original algorithm \cite{brassard1998quantum,brassard2002quantum}).
\begin{theorem}[Approximate counting \cite{aaronson2020quantum}]
\label{thm:approx_counting}
    Let $f \colon \{0, \dots, N-1\} \to \{0, 1\}$ be a Boolean function marking $K = \lvert f^{-1}(1) \rvert > 0$ items.
    Given access to a membership oracle for $f$ and $\delta, \epsilon > 0$, there exists a quantum algorithm that outputs an estimate $\hat{K}$ for $K$.
    The output $\hat{K}$ satisfies
    \begin{equation}
        \lvert \hat{K} - K \rvert < \epsilon K
    \end{equation}
    with probability at least $1 - \delta$.
    The algorithm makes fewer than
    \begin{equation}
        \bigO \mathopen{}\left( \epsilon^{-1} \sqrt{\frac{N}{K}} \log{\frac{1}{\delta}} \right)\mathclose{}
    \end{equation}
    oracle queries.
    The algorithm requires $\bigO(\log{N})$ qubits of space.
\end{theorem}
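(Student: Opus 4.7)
The plan is to reproduce the Aaronson--Rall-style simplified counting routine, which uses only Grover iterations and no quantum Fourier transform. The central analytic fact is that if $G$ denotes the Grover iterate built from the membership oracle for $f$ and the diffusion about the uniform superposition $\ket{\psi} = N^{-1/2}\sum_{j=0}^{N-1}\ket{j}$, then applying $G$ exactly $t$ times to $\ket{\psi}$ produces a state whose overlap probability with the marked subspace equals $\sin^{2}((2t+1)\theta)$, where $\theta \in (0,\pi/2]$ is determined by $\sin^{2}\theta = K/N$. Estimating $\theta$ with multiplicative accuracy $\tfrac{1}{2}\epsilon$ therefore suffices, because the map $\theta \mapsto N\sin^{2}\theta$ is Lipschitz on the relevant interval and inverts to give an estimate $\hat{K}$ obeying the required $\lvert \hat{K} - K\rvert < \epsilon K$.

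I would split the algorithm into a locating stage and a refinement stage. In the locating stage, I run Grover iterations with exponentially growing counts $t_{i} = 2^{i}$ for $i=0,1,2,\dots$ and, for each $t_{i}$, perform a constant number of single-shot measurements on $G^{t_{i}}\ket{\psi}$ to obtain an empirical estimate of $\sin^{2}((2t_{i}+1)\theta)$. I halt at the first $t^{\star}$ whose empirical probability crosses a fixed moderate threshold; with high probability this pins $(2t^{\star}+1)\theta$ into a constant subinterval bounded away from $0$ and $\pi/2$, so $t^{\star} = \Theta(\sqrt{N/K})$ and I obtain a constant-factor estimate $\theta_{\mathrm{est}}$ of $\theta$. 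In the refinement stage I choose an iteration count $t = \Theta(1/(\epsilon\theta_{\mathrm{est}}))$ so that $(2t+1)\theta$ lands in a strictly monotone portion of $\sin^{2}$ away from any extremum; averaging $\bigO(1)$ samples at this $t$ yields an estimate of $\sin^{2}((2t+1)\theta)$ accurate enough to invert and recover $\hat\theta$ with $\lvert \hat\theta - \theta \rvert \le \tfrac{1}{2}\epsilon\theta$, whence $\hat{K} = N\sin^{2}\hat\theta$ satisfies the desired multiplicative bound. Finally, the constant failure probability is amplified to $\delta$ by taking the median of $\bigO(\log(1/\delta))$ independent repetitions of the whole pipeline. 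Summing the costs, the locating stage uses $\sum_{i \le \log_{2} t^{\star}} 2^{i} = \bigO(\sqrt{N/K})$ oracle calls, the refinement stage uses $\bigO(1/(\epsilon\theta)) = \bigO(\epsilon^{-1}\sqrt{N/K})$ calls, and amplification contributes the $\log(1/\delta)$ factor, yielding the stated query complexity. The space bound is immediate, since only a single register of $\lceil \log_{2}N \rceil$ qubits together with the ancillas internal to the oracle and diffusion are manipulated throughout.

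The main obstacle lies in the refinement stage: controlling how precisely a finite-sample estimate of $\sin^{2}((2t+1)\theta)$ determines $\theta$ itself. Two failure modes must be ruled out---the argument $(2t+1)\theta$ accidentally landing near a stationary point of $\sin^{2}$, where the inversion is ill-conditioned, and branch ambiguity modulo $\pi$, where distinct values of $\theta$ produce identical probabilities. The locating stage supplies exactly the prior information needed to fix the correct branch and to pick $t$ so that the working point sits in a regime where $\sin^{2}$ has a well-controlled derivative; this, combined with Hoeffding concentration for the empirical probability and a careful tracking of constants in the Lipschitz inversion, yields the stated multiplicative accuracy with constant probability, which amplification then upgrades to $1-\delta$.
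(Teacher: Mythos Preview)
The paper does not prove this statement at all: \cref{thm:approx_counting} is quoted verbatim as a black-box result from \textcite{aaronson2020quantum}, introduced only by the sentence ``The following result of \textcite{aaronson2020quantum} regards a simplified approximate quantum counting algorithm\ldots'' with no accompanying argument. Your proposal goes well beyond what the paper does by actually sketching the Aaronson--Rall construction; as a summary of that construction it is broadly faithful (exponential search to localise $\theta$, a refinement run at iteration count $\Theta(1/(\epsilon\theta))$, median amplification), so there is no error to flag, but there is also nothing in the paper to compare it against.
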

\begin{remark}
    Recall that a membership oracle for $f$ refers to the unitary operator $U_{f}$ satisfying $U_{f} \ket{x} = (-1)^{f(x)} \ket{x}$.
\end{remark}

If $\epsilon = \varepsilon K^{-1/2}$ for some $\varepsilon > 0$, then the estimate in \cref{thm:approx_counting} would satisfy $\lvert \hat{K} - K \rvert < \varepsilon \sqrt{K}$ and the algorithm would make fewer than $\bigO \mathopen{}\bigl( \varepsilon^{-1} \sqrt{N} \log(\delta^{-1}) \bigr)\mathclose{}$ oracle queries.
However, $K$ is unknown since it is the quantity that we are attempting to estimate, so we cannot choose $\epsilon$ this way.
We show how this can be achieved in \cref{sec:abundance_counting_alg}, with the precise statement given as \cref{thm:approx_counting_sqrt_err}.

\section{Framing the problem}
\label{sec:problem}

Given a classical description of a finite causal set of $n$ elements and a dimension of interest $d$, we would like to calculate its BDG action as expressed in \cref{def:bd_action_d}.
We assume that the constants and coefficients $n_{d}$, $\alpha_{d}$, $\beta_{d}$, and $C_{k}^{(d)}$ are already known (for a given $d$ of interest these can easily be computed beforehand and reused for future calculations in the same dimension).
We also assume that the causal set is provided in a form in which $x \prec y$ can be checked for any pair of points in constant time with respect to $n$.
It is enough to calculate each of the abundances $N_{k}$ (defined in \cref{def:abundances}) appearing in \cref{def:bd_action_d} separately, as the number of these quantities appearing in the expression is finite and independent of $n$ (it is a constant for a fixed dimension $d$).
All quantities can then be combined to find the BDG action.
The time and space complexities of the combined algorithm are then equal to those of calculating the $N_{k}$ sequentially for each $k$; that is, they are the highest order complexities of calculating individual $N_{k}$ over all $k$.
Our goal has thus been simplified to find all these $N_{k}$ efficiently for any given causal set.

\subsection{Matrix of volumes}
\label{sec:volume_matrix}

The $i$th row of the adjacency matrix $\bm A$ has its $k$th element marked by a $1$ if and only if $k$ lies in the future of $i$.
Similarly, the $j$th column of the matrix has its $k$th element marked by a $1$ if and only if $k$ lies in the past of $j$.
These are the statements
\begin{subequations}
\begin{align}
    \Fut(i) & = \{ k \mid A_{ik} = 1 \} , \\
    \Past(j) & = \{ k \mid A_{kj} = 1 \} .
\end{align}
\end{subequations}
Thus, we see that the exclusive discrete volumes can be written as the dot product of a row and a column
\begin{equation}
    \lvert I(i,j) \rvert
    = \lvert \{ k \mid A_{ik} = A_{kj} = 1 \} \rvert
    = \sum_{k} A_{ik} A_{kj} .
\end{equation}
The matrix of all exclusive discrete volumes $\bm V$ defined by $V_{ij} = \lvert I(i,j) \rvert$ is therefore simply
\begin{equation}
    \bm V = \bm A^{2} .
\end{equation}

So that we can easily distinguish between entries of $V_{ij}$ that correspond to $0$-nearest neighbors and those such that $i \nprec j$, we would instead like to compute the matrix of \emph{inclusive} discrete volumes $\bm{\bar{V}}$ defined by $\bar{V}_{ij} = \lvert I[i,j] \rvert$.
This can be achieved by instead considering the adjacency matrix of the graph for the reflexive relation $\preceq$ (where every element is also related to itself) given by $\bm{\bar{A}} = \bm A + \bm I$.
We will call this matrix $\bm{\bar{A}}$ the \emph{reflexive} adjacency matrix.
Using this, we have that
\begin{subequations}
\begin{align}
    \IFut(i) & = \{ k \mid \bar{A}_{ik} = 1 \} , \\
    \IPast(j) & = \{ k \mid \bar{A}_{kj} = 1 \} .
\end{align}
\end{subequations}
The inclusive discrete volumes can be written as
\begin{equation}
    \lvert I[i,j] \rvert
    = \lvert \{ k \mid \bar{A}_{ik} = \bar{A}_{kj} = 1 \} \rvert
    = \sum_{k} \bar{A}_{ik} \bar{A}_{kj} .
\end{equation}
Therefore, the matrix of inclusive discrete volumes can be expressed as
\begin{equation}
    \bm{\bar{V}}
    = \bm{\bar{A}}^{2}
    = (\bm A + \bm I)^{2} .
\end{equation}
Unlike the adjacency matrix $\bm A$, the matrix $\bm{\bar{A}}$ has the useful property that it is invertible.
Moreover, if $(C, \prec)$ is enumerated in topological order then $\bm{\bar{A}}$ is upper unitriangular.

\begin{remark}
    An alternative method to identify $0$-nearest neighbor pairs of elements $i \prec j$ without using $\bm{\bar{A}}$ is to compute $\bm{V} = \bm{A}^{2}$ and, where $V_{ij} = 0$, check that $A_{ij} = 1$.
\end{remark}

Finally, the $k$-abundances $N_{k}$ can be found by counting the number of times the value $k+2$ appears in the inclusive volume matrix $\bm{\bar{V}}$ for each $k$ desired ($0 \leq k \leq 3$ for computing the four-dimensional BDG action of \cref{def:bd_action_4}).
Note that the $k$-abundances do not depend on the positions of the values $k+2$ in the matrix, but only on their count.
Therefore, no topological ordering of the input data need be assumed.

\section{Classical approaches}
\label{sec:classical_approaches}

Let us first consider possible approaches to numerically evaluating the BDG action of a causal set with $n$ elements, where we are limited to classical computational devices.
We will also denote the number of causally related pairs of points (edges in the corresponding DAG) by $N = \lvert E \rvert = \bigO(n^{2})$.

\subsection{Exact methods}
\label{sec:exact_methods}

Naively evaluating the action involves computing $\bigO(n^{2})$ discrete volumes between the $N$ causally related pairs of elements (each of which takes $\bigO(n)$ time) and adding to the count of each particular possible volume of interest each time.
The overall time complexity is thus $\bigO(n^{3})$.
The space complexity is $\softO(1)$.
Classical optimizations (which do not affect the asymptotic complexities) for this method have been previously explored \cite{cunningham2018high}.

\subsubsection{Matrix methods}
\label{sec:matrix_multiplication}

As discussed in \cref{sec:problem}, finding the abundances of discrete volumes is equivalent to counting the occurrences of different values in $(\bm A + \bm I)^{2}$.
Thus, any sub-cubic time algorithm for computing the matrix square yields a sub-cubic time algorithm for the action.
There indeed exist $\bigO(n^{\omega})$ time classical matrix multiplication algorithms, where the minimal value of $2 \leq \omega < 3$ is an open question in theoretical computer science.

The fastest presently known matrix multiplication algorithm \cite{alman2025more} has $\omega = 2.371339$; however, this is a \emph{galactic} algorithm that would only be useful for $n$ too large for any present-day computer to handle.
The most practical of the sub-cubic algorithms is the original algorithm by \textcite{strassen1969gaussian}, and has $\omega = \log_{2}{7} \approx 2.8074$.
The Strassen algorithm is useful for large matrices (with $n$ exceeding $1000$ or so \cite{dalberto2005using,huang2016strassen}) over exact domains, where its reduced numerical stability \cite{miller1974computational,ballard2016improving} is not an issue.
Even so, the algorithm is often avoided due to its larger memory requirements (although asymptotically it has the same $\softO(n^{2})$ space complexity as naive matrix multiplication).

\subsection{Approximate random sampling}
\label{sec:random_sampling}

Rather than explicitly computing all of the necessary order intervals in a causal set, we can instead compute a random subset of these and use the distribution of their volumes to estimate the total abundances of different volumes.
Combining these estimates of abundances as in \cref{def:bd_action_4} leads to an estimate of the BDG action.

Let us take a random sample of $K \leq N$ causally related pairs of points.
Random variables $K_{k}$ for the number of $k$-nearest neighbor pairs in the sample can be used to estimate the $k$-abundances $N_{k}$ for the whole causal set: The random variables $K_{k} / K$ are unbiased estimators for $N_{k} / N$.
Thus, by \cref{def:bd_action_4}, an estimator $\hat{S}$ for the BDG action in four dimensions is given by
\begin{equation}
    \frac{1}{\hbar} \hat{S}
    = \frac{4}{\sqrt{6}} \mathopen{}\left( \frac{l}{l_{p}} \right)^{2} \mathopen{}\left( n - \frac{N}{K} K_{0} + 9 \frac{N}{K} K_{1} - 16 \frac{N}{K} K_{2} + 8 \frac{N}{K} K_{3} \right)\mathclose{} .
\end{equation}
The error in such an estimate can be quantified by its standard deviation $\sigma(\hat{S})$, a simple bound for which is given by
\begin{equation}
\label{eq:random_err_bound}
    \frac{1}{\hbar} \sigma(\hat{S})
    \leq 34 \frac{2}{\sqrt{6}} \mathopen{}\left( \frac{l}{l_{p}} \right)^{2} \frac{N}{\sqrt{K}} \sqrt{\frac{N-K}{N-1}} .
\end{equation}
See \cref{eq:random_se_full,eq:random_se_subadditive_bound} of \cref{sec:sampling_error} for tighter error bounds.
Under the bound of \cref{eq:random_err_bound}, it can be seen that if $K$ is at least a constant fraction of $N$ then $\sigma(\hat{S}) = \bigO(\sqrt{N}) = \bigO(n)$, which is the scaling required for the \emph{relative} error to remain constant.
Specifically, letting the sample size
\begin{equation}
    K = \frac{N}{1 + \varepsilon^{2} (1 - N^{-1})}
\end{equation}
for some choice of $\varepsilon \geq 0$, we find that
\begin{align}
    \frac{1}{\hbar} \sigma(\hat{S})
    & \leq 34 \frac{2}{\sqrt{6}} \mathopen{}\left( \frac{l}{l_{p}} \right)^{2} \varepsilon \sqrt{N} \\
    & < \frac{34 \varepsilon n}{\sqrt{3}} \mathopen{}\left( \frac{l}{l_{p}} \right)^{2} .
\end{align}
Since we need to compute $\bigO(\frac{n^{2}}{1 + \varepsilon^{2}})$ discrete volumes, each of which takes $\bigO(n)$ time, the overall time complexity is thus $\bigO(\frac{n^{3}}{1 + \varepsilon^{2}})$.
The naive exact method of \cref{sec:exact_methods} is simply the special case with relative error chosen to be $\varepsilon = 0$.
Similar analyses can be performed for the BDG action in other dimensions.

\section{Quantum algorithm}
\label{sec:quantum_approach}

We now propose a quantum algorithm for the BDG action.
The algorithm is split into multiple components performed in sequence, which we present in \cref{sec:components} before combining them in \cref{sec:combined_algorithm}.

\subsection{Algorithm components}
\label{sec:components}

There are three main components: The preparation of a uniform superposition of data imported from the classical description of the causal set (using techniques from \cref{sec:uniform_superposition,sec:mcx}), the construction of a specific data-independent quantum oracle circuit to be used in the quantum counting algorithm, and an adapted version of the quantum counting algorithm discussed in \cref{sec:quantum_counting} with suitable error scaling.

\subsubsection{Data preparation}
\label{sec:data_prep}

Given a causal set with $n$ elements, the goal of this section is to prepare a uniform superposition over the $n^{2}$-dimensional subspace of $2n$-qubit basis states whose binary strings are all $n^{2}$ possible concatenated pairs of rows and columns
\begin{equation}
    D = \{ \bm{r}_{i} \bm{c}_{j} \}_{i,j=1}^{n} ,
\end{equation}
from the adjacency matrix of the causal set, where binary strings $\bm{r}_{i}$ and $\bm{c}_{j}$ are the $i$th row and $j$th column of the (reflexive) causal set adjacency matrix $\bm{\bar{A}}$, respectively.
It would suffice in what follows to consider fewer pairs (a lower dimensional subspace) by removing those whose row and column correspond to the same causal set element (or if the causal set is given in topological order, then by only considering those pairs whose column corresponds to a point to the future of that of the row); however, this would not affect the asymptotic resource complexities of the overall algorithm, so we choose not to do so.

To create the desired state, we first use a $\lceil 2 \log_{2}{n} \rceil$-qubit register $\mathcal{C}$ to produce an initial uniform superposition over the correct number $n^{2}$ of basis states
\begin{equation}
\label{eq:initial_superposition}
    \ket{\phi} = \frac{1}{n} \sum_{j=0}^{n^2 - 1} \ket{j} .
\end{equation}
This can be achieved using a circuit of depth $\bigO(\log{n})$ (see \cref{sec:uniform_superposition}).
We then proceed to map this superposition onto a larger $2n$-qubit register $\mathcal{T}$ using multi-controlled multi-target \textsc{NOT} gates (see \cref{sec:mcx}), imbuing it with the causal set data in the process.
As we will see, the final result of the section (\cref{cor:uniform_superposition}) is a uniform superposition state of the form
\begin{equation}
    \frac{1}{n} \sum_{i,j=1}^{n} \ket{h(i,j)} \otimes \ket{\bm{r}_{i}} \ket{\bm{c}_{j}} ,
\end{equation}
where $h(i,j)$ are $n^{2}$ unique integers enumerating the pairs of rows and columns (as defined by \cref{eq:row_col_enumeration} in the following explanation).
The state is produced using a circuit of depth $\bigO(n^{2} \log\log{n})$.

Let us define a helpful alternative notation for multi-controlled multi-target \textsc{NOT} gates that specifies which qubits of our registers $\mathcal{C}$ and $\mathcal{T}$ are to be used as control and target qubits, respectively.
Denote by $n_{\mathcal{C}}$ the number of qubits comprising the register $\mathcal{C}$ and by $n_{\mathcal{T}}$ the number of qubits comprising the register $\mathcal{T}$ (in our scenario $n_{\mathcal{C}} = \lceil 2 \log_{2}{n} \rceil$ and $n_{\mathcal{T}} = 2n$).
For any $c \in \{0, \dots, 2^{n_{\mathcal{C}}} - 1\}$ and $t \in \{0, \dots, 2^{n_{\mathcal{T}}} - 1\}$, we let $X_{t}^{c}$ denote the gate acting on $\mathcal{C} \otimes \mathcal{T}$ defined as
\begin{equation}
\label{eq:mcmtx_binary}
    X_{t}^{c} =  \dyad{c} \otimes \textsc{NOT}_{t} + (I_{\mathcal{C}} - \dyad{c}) \otimes I_{\mathcal{T}} ,
\end{equation}
where with $t_{j}$ denoting the $j$th bit of the binary representation of $t$ and $\sigma_{\text{x}}$ denoting the Pauli-$X$ operator, we have also defined
\begin{equation}
    \textsc{NOT}_{t} = \bigotimes_{j=1}^{n_{\mathcal{T}}} \sigma_{\text{x}}^{t_{j}} .
\end{equation}

\begin{lemma}
\label{lem:mcmtx_depth}
    For any $c \in \{0, \dots, 2^{n_{\mathcal{C}}} - 1\}$ and $t \in \{0, \dots, 2^{n_{\mathcal{T}}} - 1\}$, the gate $X_{t}^{c}$ can be implemented as a quantum circuit with a depth of at most $\bigO(\log(n_{\mathcal{C}} n_{\mathcal{T}}))$ using $n_{\mathcal{C}} - 1$ ancilla qubits.
\end{lemma}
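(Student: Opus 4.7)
The plan is to reduce the pattern-controlled pattern-target gate $X_t^c$ to a standard all-ones-controlled subset-target NOT gate $\textsc{C}^{n_{\mathcal{C}}}\textsc{NOT}^{k}$, which is exactly the object whose decomposition is analyzed in \cref{sec:mcx}. Once the reduction is in place, the depth and ancilla bounds follow from the constructions of \cref{fig:simple_mcx,fig:mcmx}.

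First I would handle the control pattern. Conjugating the operation by a Pauli-$X$ on each control qubit $j$ with $c_j = 0$ sends the projector $\dyad{c}$ to $\dyad{1 \dots 1}$ while leaving the target side unchanged, so up to a pre- and post-layer of single-qubit gates acting on disjoint qubits (depth $1$, no ancillas) it suffices to implement an all-ones-controlled operation whose target action is $\textsc{NOT}_{t}$. The latter is already a tensor product of Pauli-$X$s on exactly the subset $S = \{j : t_{j} = 1\} \subseteq \{1,\dots,n_{\mathcal{T}}\}$, so what remains is a $\textsc{C}^{n_{\mathcal{C}}}\textsc{NOT}^{|S|}$ acting on the control register together with the target qubits indexed by $S$ (the case $t = 0$ being trivial, as the gate is then the identity).

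Invoking \cref{fig:mcmx}, I would implement this remaining gate using two $\textsc{C}^{n_{\mathcal{C}}}\textsc{NOT}$ gates sandwiching a fan-out over $S$ via one ancilla qubit. Allocating the other $n_{\mathcal{C}} - 2$ ancillas to the cascaded-Toffoli construction of \cref{fig:simple_mcx}, each $\textsc{C}^{n_{\mathcal{C}}}\textsc{NOT}$ is realized in depth $O(\log n_{\mathcal{C}})$, while the fan-out over $|S| \leq n_{\mathcal{T}}$ qubits has depth $O(\log n_{\mathcal{T}})$. Adding these to the constant-depth conjugation layers yields total depth $O(\log n_{\mathcal{C}} + \log n_{\mathcal{T}}) = O(\log(n_{\mathcal{C}} n_{\mathcal{T}}))$ using $(n_{\mathcal{C}} - 2) + 1 = n_{\mathcal{C}} - 1$ ancillas, as claimed. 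Since every building block is a prior result, the proof is essentially bookkeeping; the only wrinkle is the small-$n_{\mathcal{C}}$ regime in which there are not enough ancillas for the cascaded Toffoli layout, but in that regime the gate is already constant-depth and the stated bound holds trivially.
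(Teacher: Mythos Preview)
Your proposal is correct and takes essentially the same approach as the paper: reduce $X_t^c$ to a standard multi-controlled multi-target \textsc{NOT} and then invoke the depth and ancilla bounds from \cref{sec:mcx} (specifically the $\bigO(\log(mn))$-depth, $(m-1)$-ancilla construction via \cref{fig:simple_mcx,fig:mcmx}). The only difference is cosmetic: you conjugate by Pauli-$X$ on the zero bits of $c$ to obtain an all-ones control on the full $n_{\mathcal{C}}$-qubit register, whereas the paper identifies $X_t^c$ directly with a $\textsc{C}^{w(c)}\textsc{NOT}^{w(t)}$ gate controlled only on the $1$-positions of $c$; your reduction is arguably the more careful one, since the definition of $X_t^c$ conditions on the \emph{exact} state $\ket{c}$ and hence the $0$-bits must also be checked.
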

\begin{proof}
    If $w(x)$ denotes the Hamming weight (i.e. number of nonzero symbols) of a binary string $x$, then (aside from when $c=0$ or $t=0$) $X_{t}^{c}$ coincides with the $\textsc{C}^{w(c)}\textsc{NOT}^{w(t)}$ gate that has as its control qubits those in the register $\mathcal{C}$ whose positions correspond to the positions of $1$'s in the binary representation of $c$, and has as its target qubits those in the register $\mathcal{T}$ whose positions correspond to the positions of $1$'s in the binary representation of $t$.
    An example of this is given in \cref{fig:mcmtx_map}.
    In cases where $t=0$, we always have the identity $X_{0}^{c} = I_{\mathcal{C}} \otimes I_{\mathcal{T}}$ according to \cref{eq:mcmtx_binary}.
    In cases where both $c=0$ and $t \neq 0$, we have according to \cref{eq:mcmtx_binary} that $X_{t}^{0}$ activates $\textsc{NOT}_{t}$ on the target qubits when all control qubits have states $\ket{0}$.
    This can be implemented by applying Pauli-$X$ gates to every control qubit in $\mathcal{C}$ before and after a $\textsc{C}^{n_{\mathcal{C}}}\textsc{NOT}^{w(t)}$ gate with its target qubits determined as usual by the binary representation of $t$.
    Since the maximum possible number of control qubits is $n_{\mathcal{C}}$ and the maximum target qubits is $n_{\mathcal{T}}$, the asymptotic circuit depth and the number of ancilla required are the same as those of multi-controlled multi-target \textsc{NOT} gates (see \cref{sec:mcx}).
\end{proof}
\begin{remark}
    For a causal set of $n$ elements, this means that the circuit depth is at most $\bigO(\log{n})$ using $\lceil 2 \log_{2}{n} \rceil - 1$ ancilla qubits.
\end{remark}

\begin{figure}[htb]
    \centering
    \begin{equation*}
        \begin{quantikz}
             \lstick{$\mathcal{C}$} & \qwbundle{2} & \gate[2]{X_{13}^{3}} & \\
             \lstick{$\mathcal{T}$} & \qwbundle{4} & &
        \end{quantikz}
        =
        \begin{quantikz}
             \lstick{$\mathcal{C}$} & \qwbundle{2} & \gate[2]{X_{1101_{2}}^{11_{2}}} & \\
             \lstick{$\mathcal{T}$} & \qwbundle{4} & &
        \end{quantikz}
        =
        \begin{quantikz}
            \lstick[2]{$\mathcal{C}$} & \ctrl{5} & \\
            & \control{} & \\
            \lstick[4]{$\mathcal{T}$} & \targ{} & \\
            & \targ{} & \\
            & & \\
            & \targ{} &
        \end{quantikz}
    \end{equation*}
    \caption{
        Example multi-controlled multi-\textsc{NOT} gate $X_{13}^{3}$ with controls on a two-qubit register $\mathcal{C}$ and targets on a four-qubit register $\mathcal{T}$ (as would be the register setup for a causal set with just two elements).
        Since the binary representation of $3$ is $11_{2}$, both qubits of $\mathcal{C}$ are controls.
        Since the binary representation of $13$ is $1101_{2}$, qubits $1$, $2$, and $4$ are targets.
    }
    \label{fig:mcmtx_map}
\end{figure}
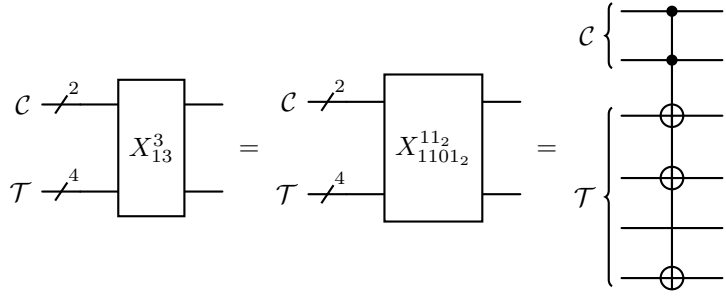

The gate $X_{t}^{c}$ acts on basis states $\ket{x}_{\mathcal{C}} \otimes \ket{y}_{\mathcal{T}}$ as
\begin{equation}
    X_{t}^{c} \ket{x}_{\mathcal{C}} \otimes \ket{y}_{\mathcal{T}} =
    \begin{cases}
        \ket{c}_{\mathcal{C}} \otimes \ket{y \oplus t}_{\mathcal{T}} & \text{if $x = c$,} \\
        \ket{x}_{\mathcal{C}} \otimes \ket{y}_{\mathcal{T}} & \text{otherwise.}
    \end{cases}
\end{equation}
Suppose that $D \subset \{0, 1\}^{m}$ is a set of $m$-bit binary strings (the \emph{data}) that we desire to be placed in a uniform superposition.
We can enumerate this data (starting from $0$) by any surjection $f \colon \{0, \dots, N - 1\} \to D$, where $N \geq \lvert D \rvert$.
Prepare the register $\mathcal{C}$ with $n_{\mathcal{C}} = \lceil \log_{2}{N} \rceil$ qubits in the state
\begin{equation}
\label{eq:general_initial_superposition}
    \ket{\phi}_{\mathcal{C}} = \frac{1}{\sqrt{N}} \sum_{j=0}^{N-1} \ket{j}_{\mathcal{C}}
\end{equation}
and prepare the register $\mathcal{T}$ with $n_{\mathcal{T}} = m$ qubits in state $\ket{0}_{\mathcal{T}}$.
We can then apply the following sequence of gates resulting in
\begin{equation}
\label{eq:general_superposition}
    X_{f(0)}^{0} \dots X_{f(N-1)}^{N-1} \ket{\phi}_{\mathcal{C}} \otimes \ket{0}_{\mathcal{T}}
    = \frac{1}{\sqrt{N}} \sum_{j=0}^{N-1} \ket{j}_{\mathcal{C}} \otimes \ket{f(j)}_{\mathcal{T}} .
\end{equation}

\begin{theorem}
\label{thm:uniform_superposition}
    The uniform superposition of \cref{eq:general_superposition} can be prepared by applying a circuit of depth $\bigO(N \log(m \log{N}))$ using $\lceil \log_{2}{N} \rceil - 1$ reusable ancilla qubits, where all qubits are initially prepared as $\ket{0}$.
    The total width is $m + 2 \lceil \log_{2}{N} \rceil - 1$ qubits.
\end{theorem}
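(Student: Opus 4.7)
The plan is to verify \cref{thm:uniform_superposition} by simply assembling the components already introduced, carefully tracking depth and width. First, I would initialize the control register $\mathcal{C}$ of $n_{\mathcal{C}} = \lceil \log_{2}{N} \rceil$ qubits into the state $\ket{\phi}_{\mathcal{C}}$ of \cref{eq:general_initial_superposition} using the algorithm of \textcite{shukla2024efficient} recalled in \cref{sec:uniform_superposition}; this contributes depth $\bigO(\log{N})$ and uses no ancillas. The target register $\mathcal{T}$ of $n_{\mathcal{T}} = m$ qubits starts as $\ket{0}_{\mathcal{T}}$ by assumption.

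Next, I would apply the sequence $X_{f(0)}^{0} \dots X_{f(N-1)}^{N-1}$. Correctness follows from the action of $X_{t}^{c}$ on basis states: since each gate $X_{f(j)}^{j}$ is controlled on a distinct computational basis string of $\mathcal{C}$, the controls are mutually orthogonal, so the $N$ gates act independently on the respective components of the superposition and each flips $\ket{0}_{\mathcal{T}}$ into $\ket{f(j)}_{\mathcal{T}}$, yielding exactly the state on the right-hand side of \cref{eq:general_superposition}. By \cref{lem:mcmtx_depth}, each $X_{t}^{c}$ can be realized with depth $\bigO(\log(n_{\mathcal{C}} n_{\mathcal{T}})) = \bigO(\log(m \log{N}))$ and requires $n_{\mathcal{C}} - 1 = \lceil \log_{2}{N} \rceil - 1$ ancilla qubits; the ancillas are prepared as $\ket{0}$ and returned to $\ket{0}$ at the end of the gate, hence they are reusable across the $N$ gates of the sequence. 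Executing the gates sequentially therefore gives total depth $N \cdot \bigO(\log(m \log{N})) = \bigO(N \log(m \log{N}))$, which dominates the initial $\bigO(\log{N})$ preparation cost.

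Finally, I would count qubits: $m$ data qubits in $\mathcal{T}$, $\lceil \log_{2}{N} \rceil$ qubits in $\mathcal{C}$, and $\lceil \log_{2}{N} \rceil - 1$ reusable ancilla qubits shared among the multi-controlled gates, for a total width of $m + 2\lceil \log_{2}{N} \rceil - 1$. No part of this argument is genuinely difficult; the only point that deserves a sentence of care is the justification that the $N$ distinct gates may all share the same ancilla pool, which follows because each $X_{t}^{c}$ from \cref{lem:mcmtx_depth} cleanly uncomputes its ancillas before the next gate in the sequence is applied.
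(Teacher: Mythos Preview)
Your proposal is correct and follows essentially the same argument as the paper's own proof: prepare $\ket{\phi}_{\mathcal{C}}$ in depth $\bigO(\log N)$, then apply the $N$ gates $X_{f(j)}^{j}$ sequentially using \cref{lem:mcmtx_depth} for the per-gate depth and ancilla count, noting that the ancillas are reset and hence reusable. Your additional sentence justifying correctness of the gate sequence and your explicit qubit tally are minor elaborations, not a different approach.
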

\begin{proof}
    Starting from the zeroed $\lceil \log_{2}{N} \rceil$-qubit state $\ket{0}_{\mathcal{C}}$, the initial uniform superposition $\ket{\phi}_{\mathcal{C}}$ of \cref{eq:general_initial_superposition} can be prepared using a circuit of depth $\bigO(\log{N})$ (see \cref{sec:uniform_superposition}).
    After this has been completed, each application of an $X_{f(j)}^{j}$ gate contributes an additional depth at most $\bigO(\log(m \log{N}))$ and makes use of $\lceil \log_{2}{N} \rceil - 1$ ancilla qubits (\cref{lem:mcmtx_depth}).
    Such gates are applied $N$ times as shown in \cref{eq:general_superposition}, and thus the overall depth becomes $\bigO(N \log(m \log{N}))$.
    All ancilla qubits are reset to $\ket{0}$ at the end of each application of a $X_{f(j)}^{j}$ gate and can then be reused, therefore no more than $\lceil \log_{2}{N} \rceil - 1$ ancilla qubits are needed.
\end{proof}

Returning to the consideration of our causal set, the data set that we import is every possible pair of rows and columns of its (reflexive) adjacency matrix $\bm{\bar{A}} = \bm A + \bm I$ (see \cref{sec:volume_matrix}), represented as $2n$-bit binary strings (concatenated row and column strings).
We thus set $m = 2n$.
Denoting binary strings representing values contained in the $i$th row of $\bm{\bar{A}}$ by $\bm{r}_{i}$ and in the $j$th column by $\bm{c}_{j}$, the data set is
\begin{equation}
    D = \{ \bm{r}_{i} \bm{c}_{j} \}_{i,j=1}^{n} .
\end{equation}
Since no two rows or columns of $\bm{\bar{A}}$ are the same (by the \emph{asymmetry} property of \cref{def:causet}), the set has cardinality $\lvert D \rvert = n^{2}$.
Let us thus set $N = n^{2}$.
For convenience, let us assign to each coordinate $(i,j)$ of an $n \times n$ matrix a unique integer by defining a bijective function $k \colon \{1, \dots, n\}^{2} \to \{0, \dots, n^{2} - 1\}$ as
\begin{equation}
\label{eq:row_col_enumeration}
    h(i,j) = (i-1)n + j-1
\end{equation}
which has inverse $h \mapsto (i(h), j(h))$, where
\begin{equation}
    i(h) = 1 + \mathopen{}\left\lfloor \frac{h}{n} \right\rfloor\mathclose{} ,\quad
    j(h) = 1 + h - n \mathopen{}\left\lfloor \frac{h}{n} \right\rfloor\mathclose{} .
\end{equation}
\begin{remark}
    The assignment by $h$ is simply incrementing an integer starting at $0$ by one for each element of an $n \times n$ matrix representation when read starting from the top left and going first from left-to-right and then top-to-bottom.
    For $n=3$, for example, $h(i,j) = H_{ij}$, where
    \begin{equation}
        \bm{H} =
        \begin{pmatrix}
            0 & 1 & 2 \\
            3 & 4 & 5 \\
            6 & 7 & 8
        \end{pmatrix} .
    \end{equation}
\end{remark}
We can then define a bijection $f \colon \{0, \dots, n^{2} - 1\} \to D$ by
\begin{equation}
    f(h) = \bm{r}_{i(h)} \bm{c}_{j(h)} .
\end{equation}
Preparing $\ket{\phi}_{\mathcal{C}}$ and applying the gates as in \cref{eq:general_superposition} with our choices of $m$, $D$, $N$, and $f$ yields the following desired corollary to \cref{thm:uniform_superposition}.

\begin{corollary}
\label{cor:uniform_superposition}
    The uniform superposition
    \begin{equation}
        \frac{1}{n} \sum_{i,j=1}^{n} \ket{h(i,j)}_{\mathcal{C}} \otimes \ket{\bm{r}_{i}} \ket{\bm{c}_{j}}
    \end{equation}
    can be prepared by applying a circuit of depth $\bigO(n^{2} \log{n})$ using $\lceil 2 \log_{2}{n} \rceil - 1$ reusable ancilla qubits, where all qubits are initially prepared as $\ket{0}$.
    The total width is $2(n + \lceil 2 \log_{2}{n} \rceil) - 1$ qubits.
\end{corollary}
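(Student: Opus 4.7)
The plan is to invoke \cref{thm:uniform_superposition} directly with the specific choices made in the construction preceding the corollary: $m = 2n$, $N = n^2$, $D = \{\bm{r}_i \bm{c}_j\}_{i,j=1}^n$, and the enumeration $f(h) = \bm{r}_{i(h)} \bm{c}_{j(h)}$. First I would verify the hypotheses of the theorem hold. The asymmetry property of \cref{def:causet} forces all rows (and separately all columns) of $\bm{\bar{A}}$ to be distinct, hence the $n^2$ concatenated strings $\bm{r}_i \bm{c}_j$ are distinct, so $\lvert D \rvert = n^2$ and $f$ is a bijection onto $D$. This makes $f$ a valid surjection from $\{0, \dots, N-1\}$ with $N \geq \lvert D \rvert$, as required by the theorem, and in fact $f$ matches the bijection already defined above the corollary.

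With the hypotheses in place, the construction described in the proof of \cref{thm:uniform_superposition} (preparing $\ket{\phi}_{\mathcal{C}}$ of \cref{eq:general_initial_superposition} and applying the sequence $X_{f(0)}^{0} \dots X_{f(N-1)}^{N-1}$) produces precisely the state in the corollary statement. The resource bounds then follow by substitution. The depth is $\bigO(N \log(m \log N)) = \bigO(n^2 \log(2n \cdot \lceil 2\log_2 n \rceil))$, and since $\log(n \log n) = \log n + \log\log n = \bigO(\log n)$, this simplifies to $\bigO(n^2 \log n)$. The ancilla count becomes $\lceil \log_2 N \rceil - 1 = \lceil \log_2 n^2 \rceil - 1 = \lceil 2\log_2 n \rceil - 1$, and these ancillas are reusable as guaranteed by the theorem. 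The total width is $m + 2 \lceil \log_2 N \rceil - 1 = 2n + 2 \lceil 2\log_2 n \rceil - 1 = 2(n + \lceil 2\log_2 n \rceil) - 1$.

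There is no substantive obstacle here; the content of the corollary is entirely contained in \cref{thm:uniform_superposition} once a specific data source has been pinned down. The only mildly nontrivial checks are noting that the asymmetry property justifies taking $N = \lvert D \rvert$ exactly (so that the prepared superposition uses no ``padding'' states), and performing the asymptotic simplification $\log(m \log N) = \bigO(\log n)$ for the choices $m = 2n$ and $N = n^2$.
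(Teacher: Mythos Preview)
Your proposal is correct and matches the paper's approach exactly: the paper presents the corollary as an immediate consequence of \cref{thm:uniform_superposition} under the substitutions $m = 2n$, $N = n^2$, and the specific $D$ and $f$ defined just before the statement, without writing out a separate proof. Your write-up simply fills in the substitution and asymptotic simplification explicitly, which is precisely what is intended.
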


\subsubsection{Oracle circuit}
\label{sec:bd_oracle}

Recall from \cref{sec:volume_matrix} that $k$-abundances can be obtained from the (reflexive) adjacency matrix $\bm{\bar{A}}$ of a causal set with $n$ elements.
Given $n$-bit binary strings $\bm r$ and $\bm c$ encoded as a $2n$-qubit quantum state $\ket{\bm r} \ket{\bm c}$, we wish to count the number of positions where the two strings share an entry $1$ (i.e. compute the Hamming weight of their bitwise \textsc{AND}, or equivalently their dot product).
If $\bm r$ represents the values contained in an arbitrary row of $\bm{\bar{A}}$ and $\bm c$ of an arbitrary column, then $\bm r \cdot \bm c$ is the inclusive discrete volume between elements in the causal set corresponding to this row and column.
We then wish to check if this computed volume is equal to $k+2$ for some choice of $0 \leq k \leq n_{d} - 1$ (where, for example, $n_{d} = 4$ for the BDG action in four dimensions).

\begin{figure}[htb]
    \centering
    \begin{equation*}
        \begin{quantikz}
            & \qwbundle{q} & \gate{\textsc{INCR}} &
        \end{quantikz}
        =
        \begin{quantikz}[wire types={q,n,q,q,q}]
            & \targ{} &&&&& \\
            \wave &&&&&& \\
            & \control{} & \ \ldots \ & \targ{} &&& \\
            & \control{} & \ \ldots \ & \control{} & \targ{} && \\
            & \ctrl{-4} & \ \ldots \ & \ctrl{-2} & \ctrl{-1} & \targ{} &
        \end{quantikz}
    \end{equation*}
    \begin{equation*}
        \begin{quantikz}
            & \qwbundle{q} & \gate{\textsc{DECR}} &
        \end{quantikz}
        =
        \begin{quantikz}[wire types={q,n,q,q,q}]
            &&&&& \targ{} & \\
            \wave &&&&&& \\
            &&& \targ{} & \ \ldots \ & \control{} & \\
            && \targ{} & \control{} & \ \ldots \ & \control{} & \\
            & \targ{} & \ctrl{-1} & \ctrl{-2} & \ \ldots \ & \ctrl{-4} &
        \end{quantikz}
    \end{equation*}
    \caption{
        Definitions of quantum gates that increment and decrement qubit representations of integers by one.
        Note that $\textsc{INCR}^{-1} = \textsc{DECR}$.
        Handling binary representations of up to an integer $n$ requires $q = 1 + \lfloor \log_{2}{n} \rfloor$ qubits and multi-controlled \textsc{NOT} gates.
        Since $m$-controlled \textsc{NOT} gates can be decomposed into $\bigO(\log(m)^{3})$ single-qubit and \textsc{CNOT} gates (see \cref{sec:mcx}) using a single reusable ancilla qubit, both \textsc{INCR} and \textsc{DECR} gates can be decomposed with depth $\bigO(q \log(q)^{3}) = O(\log(n) (\log\log{n})^{3})$.
    }
    \label{fig:incr_decr}
\end{figure}

\Cref{fig:incr_decr} defines quantum circuits whose actions are incrementing and decrementing binary representations of integers (strings of computational basis states) by one.
We now use these for counting when computing dot products.
A complete quantum oracle circuit to check whether the volume is $k+2$ as chosen is given in \cref{fig:bd_oracle}.
This oracle $V_{k}$ acts on a computational basis state $\ket{\bm r} \ket{\bm c}$ to introduce a negative phase
\begin{equation}
    V_{k} \ket{\bm r} \ket{\bm c} =
    \begin{cases}
        - \ket{\bm r} \ket{\bm c} & \text{if $\bm r \cdot \bm c = k+2$,} \\
        \ket{\bm r} \ket{\bm c} & \text{otherwise.}
    \end{cases}
\end{equation}

\begin{figure}[htb]
    \centering
    \begin{adjustbox}{width=\textwidth}
    \begin{quantikz}[wire types={q,q,n,q,q,q,n,q,q,q,q,n,q,q,q,n}]
        \lstick[4]{$\ket{\bm r}$ \\ input row \\ $n$ qubits} & \ctrl{8} && \ctrl{8} &&&&&&&&&&&&&&&&&&&&&&& \ctrl{8} && \ctrl{8} & \\
        &&&& \ctrl{7} && \ctrl{7} &&&&&&&&&&&&&&&&& \ctrl{7} && \ctrl{7} &&&& \\
        \wave &&&&&&&&&&&&&&&&&&&&&&&&&&&&& \\
        &&&&&&&& \ctrl{5} && \ctrl{5} &&&&&&&&& \ctrl{5} && \ctrl{5} &&&&&&&& \\
        \lstick[4]{$\ket{\bm c}$ \\ input col. \\ $n$ qubits} & \control{} && \control{} &&&&&&&&&&&&&&&&&&&&&&& \control{} && \control{} & \\
        &&&& \control{} && \control{} &&&&&&&&&&&&&&&&& \control{} && \control{} &&&& \\
        \wave &&&&&&&&&&&&&&&&&&&&&&&&&&&&& \\
        &&&&&&&& \control{} && \control{} &&&&&&&&& \control{} && \control{} &&&&&&&& \\
        \lstick{\textsc{AND} qubit $\ket{0}$} & \targ{} & \ctrl{4} & \targ{} & \targ{} & \ctrl{4} & \targ{} & \ \dots \ & \targ{} & \ctrl{4} & \targ{} &&&&&&&&& \targ{} & \ctrl{4} & \targ{} & \ \dots \ & \targ{} & \ctrl{4} & \targ{} & \targ{} & \ctrl{4} & \targ{} & \\
        \lstick[4]{$\ket{0 \dots 0}$ \\ inclusive volume \\ $1 + \lfloor \log_{2}{n} \rfloor$ qubits} && \gate[4]{\textsc{INCR}} &&& \gate[4]{\textsc{INCR}} && \ \dots \ && \gate[4]{\textsc{INCR}} &&& \ctrl{4} &&&&&& \ctrl{4} && \gate[4]{\textsc{DECR}} && \ \dots \ && \gate[4]{\textsc{DECR}} &&& \gate[4]{\textsc{DECR}} && \\
        &&&&&&& \ \dots \ &&&&&& \ctrl{4} &&&& \ctrl{4} &&&&& \ \dots \ &&&&&&& \\
        \wave &&&&&&&&&&&&&&&&&&&&&&&&&&&&& \\
        &&&&&&& \ \dots \ &&&&&&& \ctrl{4} && \ctrl{4} &&&&&& \ \dots \ &&&&&&& \\
        \lstick[4]{$\ket{k+2}$ \\ $k$-abundance param. \\ $1 + \lfloor \log_{2}{n} \rfloor$ qubits} &&&&&&&&&&&& \targ{} &&& \octrl{4} &&& \targ{} &&&&&&&&&&& \\
        &&&&&&&&&&&&& \targ{} && \ocontrol{} && \targ{} &&&&&&&&&&&& \\
        \wave &&&&&&&&&&&&&&&&&&&&&&&&&&&&& \\
        &&&&&&&&&&&&&& \targ{} & \ocontrol{} & \targ{} &&&&&&&&&&&&& \\
        \lstick{phase qubit $\ket{-}$} &&&&&&&&&&&&&&& \targ{} \slice{uncompute} &&&&&&&&&&&&&&
    \end{quantikz}
    \end{adjustbox}
    \caption{
        Quantum oracle circuit $V_{k}$ for the $N_{k}$ term of the BDG action for a causal set with $n$ elements.
        This circuit has width $O(n)$ qubits (which are $2n$ row/column data qubits and $2(\lfloor \log_{2}{n} \rfloor + 2)$ ancilla qubits) and depth $\tilde{O}(n)$ in single-qubit and \textsc{CNOT} gates.
        The \textsc{AND} qubit and phase qubits can be used as the borrowed ancilla required to implement multi-controlled \textsc{CNOT} gates used in the circuit (e.g., those underlying the \textsc{INCR} and \textsc{DECR} gates) and so do not affect the width.
        The cost of uncomputation can be reduced by a measure-and-fixup approach, however this does not affect the asymptotic complexity \cite{jones2013low,gidney2018halving}.
    }
    \label{fig:bd_oracle}
\end{figure}
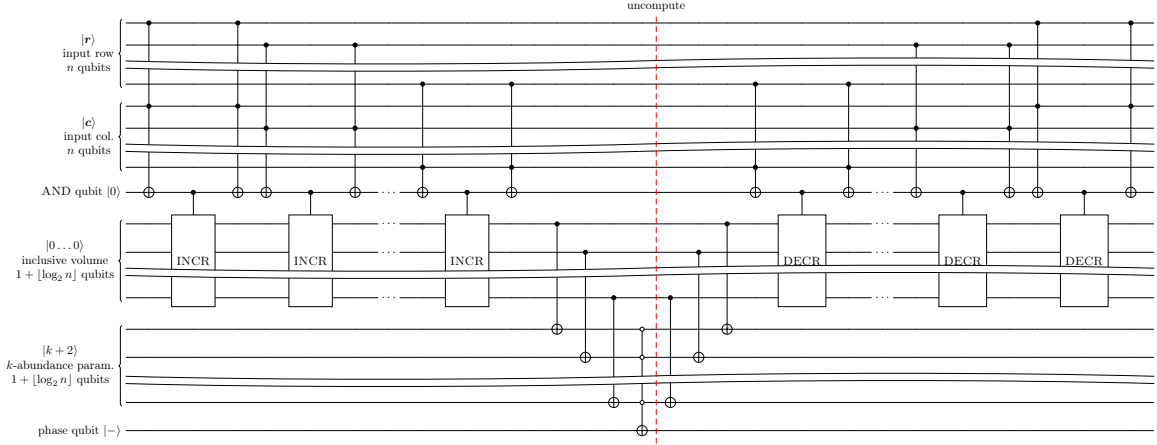

\begin{theorem}
\label{thm:bd_oracle}
    The oracle circuit $V_{k}$ of \cref{fig:bd_oracle}, which checks whether the dot product of two $n$-bit binary strings is equal to $k+2$, has depth $\bigO(n \log(n) (\log\log{n})^{3})$ and makes use of $2(\lfloor \log_{2}{n} \rfloor + 2)$ ancilla qubits.
    The total width is $2(n + \lfloor \log_{2}{n} \rfloor + 2)$.
\end{theorem}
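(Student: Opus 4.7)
The plan is to verify correctness of $V_{k}$ acting on the relevant registers, then separately bound its width and depth by walking through the three stages of the circuit in \cref{fig:bd_oracle}: (i) the forward computation of the dot product $\bm r \cdot \bm c$ into the inclusive-volume register, (ii) the comparison of this register with the classical parameter register $\ket{k+2}$ via a bitwise XOR followed by a zero-controlled flip of the phase qubit, and (iii) the uncomputation of the dot-product register via the mirror image of stage (i). Because the phase qubit is prepared in the $\ket{-}$ state, flipping it exactly when the XOR result is the all-zero string induces the factor $(-1)^{[\bm r \cdot \bm c = k+2]}$, and reversing stage (i) with the \textsc{DECR} gates together with the two XOR layers leaves all ancilla registers in their initial states.

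For the width, I would just enumerate the registers shown in the figure: the $2n$ data qubits holding $\ket{\bm r}\ket{\bm c}$, one \textsc{AND} qubit, two counting registers of $1 + \lfloor \log_{2}{n} \rfloor$ qubits each (one for the running inclusive volume, one to hold the classical parameter $k+2$), and one phase qubit initialized to $\ket{-}$. This sums to $2(n + \lfloor \log_{2}{n} \rfloor + 2)$ total qubits, of which $2(\lfloor \log_{2}{n} \rfloor + 2)$ are ancilla qubits.

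For the depth, the first stage performs $n$ sequential iterations, each consisting of: a Toffoli computing $r_{i} \wedge c_{i}$ into the \textsc{AND} qubit; a conditional \textsc{INCR} of the volume register controlled on the \textsc{AND} qubit; and a second Toffoli to uncompute the \textsc{AND} qubit. Each \textsc{INCR} on $q = 1 + \lfloor \log_{2}{n} \rfloor$ qubits decomposes (as discussed in \cref{fig:incr_decr}) into a cascade of multi-controlled \textsc{NOT} gates of depth $\bigO(q \log(q)^{3}) = \bigO(\log(n)(\log\log{n})^{3})$ using the Claudon et al.\ compilation from \cref{sec:mcx}, with the spare \textsc{AND} or phase qubits serving as the single borrowed ancilla required there. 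Each iteration therefore has depth $\bigO(\log(n)(\log\log{n})^{3})$, and the whole stage has depth $\bigO(n \log(n)(\log\log{n})^{3})$. Stage (ii) is a constant number of layers of \textsc{CNOT} gates of depth $\bigO(1)$ (one per bit of the counter), followed by a single $(1 + \lfloor \log_{2}{n} \rfloor)$-controlled flip of the phase qubit, which by \cref{sec:mcx} has depth $\bigO((\log\log{n})^{3})$; stage (iii) is the inverse of stage (i) and so matches its depth. Summing the three stages gives the claimed depth $\bigO(n \log(n)(\log\log{n})^{3}) = \softO(n)$.

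The only slightly delicate point is justifying that the $\textsc{C}^{q}\textsc{NOT}$ gates inside \textsc{INCR} and \textsc{DECR} really can be compiled in polylog depth at the claimed circuit width, since the ancilla budget for $V_{k}$ does not separately reserve qubits for this compilation. The reason is that throughout stage (i) the phase qubit is idle (it is only touched by the single multi-controlled gate in stage (ii)), so it is available as the borrowed ancilla for the decomposition of each $\textsc{C}^{q}\textsc{NOT}$; symmetrically in stage (iii) and analogously for the multi-controlled flip in stage (ii) (which can borrow the \textsc{AND} qubit, at that point restored to $\ket{0}$). This borrowing is what keeps the ancilla count at $2(\lfloor \log_{2}{n} \rfloor + 2)$ and is the main thing to check carefully; once it is established, the depth and width bounds follow immediately from the preceding lemmas.
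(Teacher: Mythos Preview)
Your proposal is correct and follows essentially the same approach as the paper's proof: both walk through the same three stages (compute the dot product via controlled \textsc{INCR}s, XOR against the $\ket{k+2}$ register and apply a single zero-controlled flip of the phase qubit, then uncompute), and both obtain the depth bound from the $n$ sequential \textsc{INCR}/\textsc{DECR} applications of depth $\bigO(\log(n)(\log\log n)^{3})$. Your explicit discussion of which idle qubit serves as the borrowed ancilla for the $\textsc{C}^{q}\textsc{NOT}$ compilations is a nice addition that the paper only notes in the figure caption.
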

\begin{proof}
    The circuit proceeds by checking the first qubit of each of $\ket{r}$ and $\ket{c}$ and, if they both take value $1$, storing this \textsc{AND} to a zeroed single-qubit ancilla and incrementing another ancilla register storing a $1 + \lfloor \log_{2}{n} \rfloor$-bit integer (initially prepared as $\ket{0 \dots 0}$) by one before returning the \textsc{AND} ancilla to zero.
    This process is then repeated for the second qubits of $\ket{r}$ and $\ket{c}$, and so on until the $n$th qubits.
    Since the increment gates \textsc{INCR} each have depth $\bigO(\log(n) (\log\log{n})^{3})$, the depth of all $n$ of these steps performed in sequence is $\bigO(n \log(n) (\log\log{n})^{3})$.
    Our register of ancilla qubits now contains a binary representation of the inclusive discrete volume.
    The circuit now performs \textsc{CNOT} gates controlled on each qubit of the ancilla and targeting each qubit of a further ancilla register of equal size prepared in the state $\ket{k+2}$.
    This has depth $\bigO(1)$, as all gates are \textsc{CNOT} and can be performed simultaneously.
    If the qubits of both ancillae represent the same integer ($k+2$), then the $\ket{k+2}$-prepared ancilla now lies in the state $\ket{0 \dots 0}$.
    A single $\textsc{C}^{1 + \lfloor \log_{2}{n} \rfloor}\textsc{NOT}$ gate (with controls reversed by Pauli-$X$ gates) is then applied to introduce the negative phase to a $\ket{-}$ qubit ancilla if this is the case, which has depth $\bigO((\log\log{n})^{3})$.
    Finally, we perform uncomputation to return all qubits back to their original state (but with the possible negative phase).
    The total depth of the circuit is therefore $\bigO(n \log(n) (\log\log{n})^{3})$, with a width of $2(n + \lfloor \log_{2}{n} \rfloor + 2)$ qubits.
\end{proof}

\subsubsection{Counting algorithm}
\label{sec:abundance_counting_alg}

In order to ensure that the error of our BDG action estimate scales linearly in $n$ rather than quadratically, we would like a version of quantum counting (\cref{thm:approx_counting}) such that the error is bounded by $\bigO(\varepsilon \sqrt{K})$ for some $\varepsilon > 0$.
We can apply two runs of the algorithm in \cref{thm:approx_counting} to achieve this.
Setting the parameter $\delta = 1 - \sqrt{1 - \zeta}$, perform the following steps.
\begin{enumerate}
    \item Run quantum counting with the error parameter $\epsilon = \varepsilon$ to estimate $K$ by $\hat{K}$.

    \item Use this estimate $\hat{K}$ to run the algorithm again with appropriately reduced error parameter
    \begin{equation}
    \epsilon = \varepsilon \sqrt{1 - \varepsilon} \hat{K}^{- \frac{1}{2}} .
    \end{equation}
\end{enumerate}

\begin{theorem}[Approximate counting with square-root error]
\label{thm:approx_counting_sqrt_err}
    Let $f \colon \{0, \dots, N-1\} \to \{0, 1\}$ be a Boolean function marking $K = \lvert f^{-1}(1) \rvert > 0$ items.
    Given access to a membership oracle for $f$ and $\varepsilon, \zeta > 0$, there exists a quantum algorithm that outputs an estimate $\hat{K}$ for $K$.
    The output $\hat{K}$ satisfies
    \begin{equation}
        \lvert \hat{K} - K \rvert < \varepsilon \sqrt{K}
    \end{equation}
    with probability at least $1 - \zeta$.
    The algorithm makes fewer than
    \begin{equation}
        \bigO \mathopen{}\left( \varepsilon^{-1} \sqrt{N} \log{\frac{1}{\zeta}} \right)\mathclose{}
    \end{equation}
    oracle queries.
    The algorithm requires $\bigO(\log{N})$ qubits of space.
\end{theorem}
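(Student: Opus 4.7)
The plan is to argue that the two-stage procedure outlined just before the theorem achieves the stated bound, by tracking the guarantees of \cref{thm:approx_counting} through each stage and applying a union bound on the failure probabilities. Set the per-stage failure parameter $\delta = 1 - \sqrt{1 - \zeta}$, so that $(1-\delta)^{2} = 1 - \zeta$ and $\log(1/\delta) = \bigO(\log(1/\zeta))$ for $\zeta$ bounded away from $1$.

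\textbf{Stage 1.} Run \cref{thm:approx_counting} with $\epsilon = \varepsilon$ and failure probability $\delta$ to obtain an estimate $\hat K_{1}$. With probability at least $1-\delta$, this satisfies $(1-\varepsilon)K < \hat K_{1} < (1+\varepsilon)K$, and costs $\bigO \bigl( \varepsilon^{-1}\sqrt{N/K}\log(1/\delta) \bigr)$ queries.

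\textbf{Stage 2.} Run \cref{thm:approx_counting} again, now with error parameter $\epsilon_{2} = \varepsilon \sqrt{1-\varepsilon}\, \hat K_{1}^{-1/2}$ and failure probability $\delta$, producing the final estimate $\hat K$. Conditional on success in Stage 1, one has $\hat K_{1} > (1-\varepsilon) K$, which rearranges to $\epsilon_{2} K < \varepsilon \sqrt{K}$. Conditional further on success in Stage 2, the guarantee $|\hat K - K| < \epsilon_{2} K$ then implies $|\hat K - K| < \varepsilon \sqrt{K}$, as desired. By the union bound, the probability that both stages succeed is at least $(1-\delta)^{2} = 1-\zeta$.

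\textbf{Query and space cost.} The Stage 1 cost is $\bigO \bigl( \varepsilon^{-1}\sqrt{N/K}\log(1/\zeta) \bigr) = \bigO \bigl( \varepsilon^{-1}\sqrt{N}\log(1/\zeta) \bigr)$ since $K \geq 1$. For Stage 2, the query count is $\bigO \bigl( \epsilon_{2}^{-1}\sqrt{N/K}\log(1/\delta) \bigr)$. Using $\hat K_{1} < (1+\varepsilon) K$ (on the good event of Stage 1), one bounds
\begin{equation}
    \epsilon_{2}^{-1} \sqrt{\tfrac{N}{K}} = \varepsilon^{-1} (1-\varepsilon)^{-1/2} \sqrt{\hat K_{1}} \sqrt{\tfrac{N}{K}} < \varepsilon^{-1} \sqrt{\tfrac{1+\varepsilon}{1-\varepsilon}} \sqrt{N} ,
\end{equation}
so Stage 2 also uses $\bigO \bigl( \varepsilon^{-1} \sqrt{N} \log(1/\zeta) \bigr)$ queries. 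Summing gives the claimed overall query bound. Both stages use $\bigO(\log N)$ qubits, so the space bound is unchanged. (If Stage 1 fails, $\hat K_{1}$ could be arbitrary; in that case the Stage 2 cost is not controlled, but this only contributes to the failure event absorbed into the $\zeta$ probability, so the expected or conditional-on-success complexity remains as stated.)

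\textbf{Main obstacle.} The only nontrivial step is checking that the feedback choice of $\epsilon_{2}$ is both (a) small enough to turn the relative-error guarantee of \cref{thm:approx_counting} into the absolute square-root-error bound, and (b) not so small that the Stage 2 query count blows up. Both requirements reduce to the simple two-sided bracketing $(1-\varepsilon) K < \hat K_{1} < (1+\varepsilon) K$ from Stage 1, so once one has correctly guessed the scaling $\epsilon_{2} \propto \hat K_{1}^{-1/2}$ and inserted the factor $\sqrt{1-\varepsilon}$ to absorb the worst case of (a), the proof is a short calculation.
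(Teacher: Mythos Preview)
Your proposal is correct and follows essentially the same two-stage argument as the paper: run \cref{thm:approx_counting} once with $\epsilon=\varepsilon$ to bracket $K$, feed back $\epsilon_{2}=\varepsilon\sqrt{1-\varepsilon}\,\hat K_{1}^{-1/2}$, and use the two-sided bound on $\hat K_{1}$ to control both the final error and the Stage~2 query count, with $(1-\delta)^{2}=1-\zeta$ giving the success probability. Your parenthetical remark about the uncontrolled Stage~2 cost when Stage~1 fails is a point the paper leaves implicit; otherwise the arguments coincide.
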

\begin{proof}
    After the first run of the algorithm of \cref{thm:approx_counting} with error parameter $\epsilon_{1} = \varepsilon$, we obtain an estimate $\hat{K}_{1}$ for $K$ that satisfies
    \begin{equation}
    \label{eq:K1_estimate}
        (1 - \varepsilon) K < \hat{K}_{1} < (1 + \varepsilon) K
    \end{equation}
    with probability at least $1 - \delta$.
    We now run the algorithm again, but this time we set the error parameter to
    \begin{equation}
    \label{eq:eps2}
        \epsilon_{2} = \varepsilon \sqrt{1 - \varepsilon} \hat{K}_{1}^{- \frac{1}{2}} .
    \end{equation}
    Inserting \cref{eq:K1_estimate} into \cref{eq:eps2} gives
    \begin{equation}
    \label{eq:eps2_bound}
        \sqrt{\frac{1 - \varepsilon}{1 + \varepsilon}} \frac{\varepsilon}{\sqrt{K}} < \epsilon_{2} < \frac{\varepsilon}{\sqrt{K}} .
    \end{equation}
    Thus, the estimate $\hat{K}$ obtained from the second run of the algorithm satisfies $\lvert \hat{K} - K \rvert < \epsilon_{2} K < \varepsilon \sqrt{K}$ with a probability of at least $1 - \delta$, conditioned on the fact that $\hat{K}_{1}$ satisfied the error inequality of the first run.
    The overall probability that the error bound for $\hat{K}$ is satisfied is therefore at least $(1 - \delta)^{2} = 1 - \zeta$.

    The first run of the algorithm makes fewer than $\bigO \mathopen{}\bigl( \varepsilon^{-1} \sqrt{N / K} \log(\delta^{-1}) \bigr)\mathclose{}$ oracle queries.
    Since we have from the lower bound of \cref{eq:eps2_bound} that
    \begin{equation}
        \epsilon_{2}^{-1}
        < \sqrt{\frac{1 + \varepsilon}{1 - \varepsilon}} \frac{\sqrt{K}}{\varepsilon}
        \leq \frac{\sqrt{K}}{\varepsilon(1 - \varepsilon)} ,
    \end{equation}
    the second run makes fewer than $\bigO \mathopen{}\bigl( \varepsilon^{-1} \sqrt{N} \log(\delta^{-1}) \bigr)\mathclose{}$ queries.
    The total number of queries of both runs is therefore also $\bigO \mathopen{}\bigl( \varepsilon^{-1} \sqrt{N} \log(\delta^{-1}) \bigr)\mathclose{}$.
    By our choice of $\delta = 1 - \sqrt{1 - \zeta}$, we have $\log_{2}(\delta^{-1}) \leq 1 + \log_{2}(\zeta^{-1})$ and the result follows.
\end{proof}

\subsection{Combined algorithm}
\label{sec:combined_algorithm}

We now have all the pieces in place to describe the algorithm to estimate the $k$-abundance $N_{k}$ for a causal set with $n$ elements.
\begin{enumerate}
    \item Prepare the uniform superposition state
    \begin{equation}
        \frac{1}{n} \sum_{i,j=1}^{n} \ket{h(i,j)}_{\mathcal{C}} \otimes \ket{\bm{r}_{i}} \ket{\bm{c}_{j}}
    \end{equation}
    as described in \cref{sec:data_prep}, where the $\bm{r}_{i}$ and $\bm{c}_{j}$ encode the $i$th rows and $j$th columns of the reflexive adjacency matrix of the causal set as $n$-bit binary strings.
    \begin{description}
        \item[Depth] $\bigO(n^{2} \log{n})$.
        \item[Width] $2n + 2 \lceil 2 \log_{2}{n} \rceil - 1$ qubits.
    \end{description}

    \item Perform quantum counting (see \cref{thm:approx_counting} of \cref{sec:quantum_counting}) with some choice $\varepsilon > 0$ (pertaining to the tolerable error), probability of failure $\delta$, and oracle as constructed in \cref{fig:bd_oracle} of \cref{sec:bd_oracle} over the subspace spanned by the $n^2$ orthonormal states $\ket{h(i,j)}_{\mathcal{C}} \otimes \ket{\bm{r}_{i}} \ket{\bm{c}_{j}}$.
    \begin{description}
        \item[Error] $\lvert \hat{N_{k}} - N_{k} \rvert < \varepsilon \sqrt{N_{k}} < \frac{\varepsilon n}{\sqrt{2}}$.
        \item[Depth] $\bigO \mathopen{}\left( n^{2} \log(n) (\log\log{n})^{3} \varepsilon^{-1} \log(\delta^{-1}) \right)\mathclose{}$.
        \item[Width] $2n + \lceil 2 \log_{2}{n} \rceil + 2 \lfloor \log_{2}{n} \rfloor + 4 = \bigO(n)$ qubits.
    \end{description}
    The upper error bound is due to $k$-abundances being unable to exceed the total number of pairs of causal set elements $N_{k} \leq \frac{1}{2}n(n-1) < \frac{1}{2} n^{2}$.
    The extra $\lceil 2 \log_{2}{n} \rceil$ term in the width compared to \cref{thm:bd_oracle} appears due to the qubits in the register $\mathcal{C}$ that enumerate the data states prepared in the last stem, and must be held for the duration of the quantum counting without being measured.
\end{enumerate}
The total depth of both state preparation and counting is $\softO(n^{2} \varepsilon^{-1})$.
The total width of the protocol is $2n + \lceil 2 \log_{2}{n} \rceil + 2 \lfloor \log_{2}{n} \rfloor + 4 = \bigO(n)$ since the ancilla qubits of the first step can be reused for the oracle circuit in the second step.

As discussed at the beginning of \cref{sec:problem}, to compute an estimate of the BDG action we can simply repeat this as a subroutine once for each $N_{k}$ appearing in the action (\cref{def:bd_action_d}) and then combine the estimated $N_{k}$ accordingly.
Thus, $n_{d}$ runs must be performed to estimate the BDG action in $d$ dimensions.
The error in the resulting estimate of the action propagates from the errors in the $N_{k}$ (each of which is correct with probability at least $\delta$) by the triangle inequality, with the overall success probability becoming at least $(1 - \delta)^{n_{d}}$.
For example, we can estimate the BDG action in four dimensions (\cref{def:bd_action_4}) with error
\begin{equation}
\begin{alignedat}{2}
    \frac{1}{\hbar} \lvert \hat{S} - S \rvert
    & \leq \frac{4}{\sqrt{6}} \mathopen{}\left( \frac{l}{l_{p}} \right)^{2}
    \mathopen{}\Bigl[ && \lvert \hat{N}_{0} - N_{0} \rvert + 9 \lvert \hat{N}_{1} - N_{1} \rvert \\
    &&& + 16 \lvert \hat{N}_{2} - N_{2} \rvert + 8 \lvert \hat{N}_{3} - N_{3} \rvert \Bigr]\mathclose{} \\
    & < \frac{68 \varepsilon n}{\sqrt{3}} \mathopen{}\left( \frac{l}{l_{p}} \right)^{2}
\end{alignedat}
\end{equation}
occurring with probability at least $(1 - \delta)^{4}$.
Finally, we can state the following result of the algorithm we have described.

\begin{theorem}
\label{thm:bd_quantum_algorithm}
    Let $C$ be a causal set of $n$ elements.
    Given a spacetime dimension $d$ and $\delta, \varepsilon > 0$, the quantum algorithm of \cref{alg:bd_quantum_algorithm} outputs an estimate $\hat{S}^{(d)}(C)$ for the Benincasa--Dowker--Glaser action $S^{(d)}(C)$.
    The estimate satisfies
    \begin{equation}
        \left\lvert \hat{S}^{(d)}(C) - S^{(d)}(C) \right\rvert < \bigO(\varepsilon n)
    \end{equation}
    with probability at least $(1 - \delta)^{n_{d}}$.
    The algorithm has a running time
    \begin{equation}
        \softO \mathopen{}\left( n^{2} \varepsilon^{-1} \log(\delta^{-1}) \right)\mathclose{}
    \end{equation}
    and requires
    \begin{equation}
        2n + \lceil 2 \log_{2}{n} \rceil + 2 \lfloor \log_{2}{n} \rfloor + 4 = \bigO(n)
    \end{equation}
    qubits of space.
\end{theorem}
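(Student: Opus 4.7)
The plan is to assemble the result by running the subroutine of Section 4.2 once for each $k \in \{0, \dots, n_d - 1\}$, each time producing an estimate $\hat{N}_k$ of the $k$-abundance $N_k$, and then plugging these into the closed form of Definition \ref{def:bd_action_d}. Since $n_d$ is a constant independent of $n$ (depending only on $d$), repeating the subroutine $n_d$ times does not change the asymptotic running time or space complexity; it only affects the success probability through a union bound.

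First I would analyse a single run. By Corollary \ref{cor:uniform_superposition}, the data superposition can be prepared in depth $\bigO(n^2 \log n)$ on $\bigO(n)$ qubits. By Theorem \ref{thm:bd_oracle}, the oracle $V_k$ of Figure \ref{fig:bd_oracle} has depth $\softO(n)$ and flips the sign of exactly those basis vectors $\ket{h(i,j)} \otimes \ket{\bm r_i}\ket{\bm c_j}$ with $\bm r_i \cdot \bm c_j = k+2$, i.e. with $n[i,j] = k+2$. By Definition \ref{def:abundances}, the number of marked basis states among the $N = n^2$ uniformly superposed states is exactly $N_k$. Applying Theorem \ref{thm:approx_counting_sqrt_err} with parameters $\varepsilon$ and $\delta$ then yields an estimate $\hat{N}_k$ satisfying $|\hat{N}_k - N_k| < \varepsilon \sqrt{N_k}$ with probability at least $1 - \delta$, using $\bigO(\varepsilon^{-1} n \log(\delta^{-1}))$ oracle queries on $\bigO(\log n)$ extra qubits. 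Combining the $\softO(n)$ depth of each oracle call with the number of queries gives subroutine depth $\softO(n^2 \varepsilon^{-1} \log(\delta^{-1}))$, which dominates the initial $\bigO(n^2 \log n)$ cost of data preparation.

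Next I would convert the per-abundance error into an error on the action. Using $N_k \le \binom{n}{2} < n^2/2$ gives $\sqrt{N_k} < n/\sqrt{2}$, so conditional on success $|\hat{N}_k - N_k| < \varepsilon n / \sqrt{2}$. Since Definition \ref{def:bd_action_d} expresses $\hbar^{-1} S^{(d)}$ as a fixed linear combination (in $n_d$ and $d$, not in $n$) of the $N_k$ plus a term linear in $n$ which is computed exactly, the triangle inequality gives
\begin{equation}
\bigl|\hat S^{(d)}(C) - S^{(d)}(C)\bigr| \le \hbar\, \alpha_d \Bigl(\frac{l}{l_p}\Bigr)^{d-2} |\beta_d| \sum_{k=0}^{n_d - 1} \bigl|C^{(d)}_{k+1}\bigr| \, |\hat N_k - N_k| = \bigO(\varepsilon n),
\end{equation}
with the hidden constant depending on $d$ but not on $n$.

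For the success probability and the remaining resource bounds, a union bound over the $n_d$ independent subroutine calls gives overall success probability at least $(1 - \delta)^{n_d}$. The total running time is the sum of $n_d$ copies of the single-run time $\softO(n^2 \varepsilon^{-1} \log(\delta^{-1}))$, which since $n_d$ is constant remains $\softO(n^2 \varepsilon^{-1} \log(\delta^{-1}))$. For the space, the ancilla and data registers of one run can be reset and reused across the $n_d$ runs, so the peak qubit count is the per-run width $2n + \lceil 2 \log_2 n \rceil + 2 \lfloor \log_2 n \rfloor + 4 = \bigO(n)$, as already tallied at the end of Section \ref{sec:combined_algorithm}. No step is a serious obstacle, since all the heavy lifting (data preparation, oracle construction, and square-root-error counting) has been done in the preceding subsections; the remaining work is essentially accounting, with the only mild subtlety being to observe that the $d$-dependent but $n$-independent coefficients $\alpha_d, \beta_d, C^{(d)}_{k+1}$ can be absorbed into the hidden constant of the $\bigO(\varepsilon n)$ error.
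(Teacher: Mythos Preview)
Your proposal is correct and follows essentially the same approach as the paper: estimate each $N_k$ separately via data preparation (Corollary~\ref{cor:uniform_superposition}), the oracle $V_k$ (Theorem~\ref{thm:bd_oracle}), and square-root-error counting (Theorem~\ref{thm:approx_counting_sqrt_err}); bound $\sqrt{N_k} < n/\sqrt{2}$ via $N_k \le \binom{n}{2}$; propagate to the action by the triangle inequality; and combine over the $n_d$ runs. One terminological nit: you write ``union bound'' but then (correctly, and matching the paper) give the product $(1-\delta)^{n_d}$, which relies on the independence of the $n_d$ runs rather than a union bound.
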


\section{Discussion}
\label{sec:discussion}

We have constructed an efficient algorithm for computing the Benincasa--Dowker--Glaser discrete causal set action in $d$ dimensions.
Starting from a classical description of a causal set with $n$ elements, we showed that all abundances of $k$-nearest neighbor pairs (and thus the BDG action itself) can be found in time $\softO(n^{2})$ by quantum counting.

Expressed here as $\softO(n^{2})$, we have suppressed the logarithmic factors that appear in the time complexity.
These factors are inherited from the time complexity of implementing multi-controlled \textsc{NOT} and fan-out gates.
If improvements are made to state-of-the-art implementations of these gates, either in their compilation (in terms of single- and two-qubit gates) or specialized hardware architecture, then these can be used directly to improve the time complexity of our protocol \cite{broadbent2009parallelizing,pham20132d,rasmussen2020single,guo2022implementing,fenner2023implementing,claudon2024polylogarithmic}.
However, we expect that BDG action algorithms have a time complexity of at least $\Omega(n^{2})$, since this is the number of possible pairs of elements in a causal set (and matches the best possible bound on $n \times n$ matrix multiplication algorithms with classical output).
In other words, we suspect that our runtime is tight up to logarithmic factors.
In terms of space complexity, our algorithm requires a total width of $\bigO(n)$ qubits.
This is because, unlike the naive classical $\bigO(1)$ algorithm (which can be performed with $\softO(1)$ space), our quantum algorithm must work with states encoding full $n$-bit rows and columns of the causal set adjacency matrix so that it can make use of the quantum ``parallelism'' associated with a superposition of data in order to perform the counting efficiently.
The width excluding these data qubits is also $\softO(1)$.
Our algorithm estimates the true value of the BDG action with an error proportional to $\varepsilon n$, where $\varepsilon$ is a chosen constant.
Although this is increasing in $n$, the growth is linear.
Since the action of the spacetime region over which the action is evaluated generally also scales linearly with $n$, the relative error (compared to its exact value) in our estimate is generally approximately constant and proportional to the choice of $\varepsilon$.
A comparison between merit figures for our quantum algorithm and other classical algorithms can be found in \cref{tab:comparison}.

\begin{table}[htb]
    \centering
    \begin{tabular}{lccc}
        \toprule
        Method & Absolute error & \multicolumn{2}{c}{Complexity} \\
        \cmidrule(l){3-4}
        && Time & Space \\
        \midrule
        Quantum counting & $\bigO(\varepsilon n)$ & $\softO(n^{2} \varepsilon^{-1})$ & $\bigO(n)$ \\
        Strassen matrix & Exact & $\bigO(n^{2.8074})$ & $\softO(n^{2})$ \\
        Random sampling & $\bigO(\varepsilon n)$ & $\bigO(n^{3} (1 + \varepsilon^{2})^{-1})$ & $\softO(1)$ \\
        Naive & Exact & $\bigO(n^{3})$ & $\softO(1)$ \\
        \bottomrule
    \end{tabular}
    \caption{
        Comparison between time and space resources required by the different methods discussed for computing the BDG action for a causal set with $n$ elements, either exactly or approximately, ordered by time complexity from fastest to slowest.
        While $\varepsilon$ represents the relative error in approximate methods, in order to achieve the same absolute error its value may differ by a constant factor between rows.
        For our quantum algorithm, the space complexity is the quantum width measured in number of qubits.
        We do not expect better than $\bigO(n^{2})$ time complexity to be achievable since there are this many pairs of elements in a causal set.
    }
    \label{tab:comparison}
\end{table}

\paragraph{Future work}

In certain practical applications (such as Monte Carlo experiments), it is not the BDG action expressed in \cref{def:bd_action_4} that is usually calculated, but rather a ``smeared'' or ``nonlocal'' BDG action that avoids large fluctuations from the mean by introducing a mesoscale (and associated nonlocality parameter) while still converging to the EH action in the continuum limit \cite{benincasa2010scalar,surya2012evidence,dowker2013causal,glaser2016hartle,henson2017onset,glaser2018finite}.
From an algorithmic perspective, the expressions are similar, but include contributions from larger order intervals (with the number of terms increasing with the causal set size).
However, by truncating this sum appropriately, our techniques for efficiently finding $k$-abundances may be directly applicable and result in a similar quantum speedup.

Another avenue for future work is that of improving the time complexity of our algorithm.
This could conceivably be achieved by more careful consideration of the data structure used when preparing our input state, utilizing any further state-of-the-art advances in compiling the generalized Toffoli gates that we use, or the application of specialized hardware architectures designed for these purposes \cite{broadbent2009parallelizing,pham20132d,rasmussen2020single,guo2022implementing,fenner2023implementing,claudon2024polylogarithmic,khattar2025rise}.

Although the quantum algorithm we have proposed requires large-scale fault-tolerant quantum hardware that has yet to be produced, we may consider whether a quantum counting variant (and therefore our algorithm) may be implemented in the near term using so-called noisy intermediate-scale quantum (NISQ) hardware \cite{brooks2019beyond}.
For example, it is known that the quadratic speedup of Grover's algorithm for unstructured search can be recovered on adiabatic quantum hardware, provided that their Hamiltonian evolution is optimally scheduled \cite{roland2002quantum,adamson2025adiabatic}.
There exist various schemes for adapting such adiabatic algorithms to NISQ hardware \cite{garcia2018addressing,harwood2022improving,kolotouros2024simulating}.
We may also estimate the resources required to perform the algorithm on fault-tolerant hardware, which include the necessary quantum error correcting codes \cite{beverland2022assessing}.
This resource estimation would allow us to better understand the timescale within which the fault-tolerant version of our algorithm could be of practical use.

\section*{Acknowledgments}

We would like to thank David A. Meyer and Adithya Sireesh for useful discussions.
S.A.A. and P.W. acknowledge funding from STFC with grant number ST/W006537/1, and EPSRC grants EP/T001062/1 and EP/Z53318X/1.

\appendix

\section{Standard error for random sampling}
\label{sec:sampling_error}

Consider randomly sampling (without replacement) $K$ pairs of points $x \prec y$ from a population of $N = \lvert E \rvert \leq n(n-1)/2$ pairs in a causal set with $n$ elements.
For any pair, we may compute the cardinality of their inclusive order interval (the inclusive \emph{volume}).
The points are called $k$-nearest neighbors if and only if this cardinality is $k+2$.

Let $N_{k}$ denote the total number of $k$-nearest neighbor pairs in the causal set and let $K_{k}$ denote a random variable for the number counted in the sample.
The number of pairs with each different volume in the sample $(K_{0}, \dots, K_{n-2})$ has the multivariate hypergeometric distribution.
We can then write that
\begin{subequations}
\begin{align}
    \Var(K_{i}) &= K \frac{N - K}{N - 1} \frac{N_{i}}{N} \mathopen{}\left( 1 - \frac{N_{i}}{N} \right)\mathclose{} , \\
    \Cov(K_{i}, K_{j}) &= - K \frac{N - K}{N - 1} \frac{N_{i}}{N} \frac{N_{j}}{N} \leq 0 .
\end{align}
\end{subequations}

An unbiased estimator for $N_{k} / N$ is the proportion of $k$-element order intervals $K_{k} / K$ in the sample.
The action
\begin{equation}
    \frac{1}{\hbar} S = \frac{4}{\sqrt{6}} \mathopen{}\left( \frac{l}{l_{p}} \right)^{2} (n - N_{0} + 9 N_{1} - 16 N_{2} + 8 N_{3})
\end{equation}
can thus be estimated by
\begin{equation}
    \frac{1}{\hbar} \hat{S}
    = \frac{4}{\sqrt{6}} \mathopen{}\left( \frac{l}{l_{p}} \right)^{2} \mathopen{}\left( n - \frac{N}{K} K_{0} + 9 \frac{N}{K} K_{1} - 16 \frac{N}{K} K_{2} + 8 \frac{N}{K} K_{3} \right)\mathclose{} .
\end{equation}
The variance of this estimate is
\begin{equation}
    \frac{1}{\hbar^{2}} \Var(\hat{S})
    = \frac{8}{3} \mathopen{}\left( \frac{l}{l_{p}} \right)^{4} \mathopen{}\left( \frac{N}{K} \right)^{2} \Var(- K_{0} + 9 K_{1} - 16 K_{2} + 8 K_{3}) ,
\end{equation}
where the variance of the linear combination can be expressed as
\begin{equation}
\begin{split}
    &\Var(- K_{0} + 9 K_{1} - 16 K_{2} + 8 K_{3}) \\
    &\qquad = \Var(K_{0}) + 81 \Var(K_{1}) + 256 \Var(K_{2}) + 64 \Var(K_{3}) \\
    &\qquad\qquad - 18 \Cov(K_{0}, K_{1}) + 32 \Cov(K_{0}, K_{2}) - 16 \Cov(K_{0}, K_{3}) \\
    &\qquad\qquad - 288 \Cov(K_{1}, K_{2}) + 144 \Cov(K_{1}, K_{3}) - 256 \Cov(K_{2}, K_{3}) \\
    &\qquad = K \frac{N - K}{N - 1} \mathopen{}\biggl[
    \frac{N_{0}}{N} \mathopen{}\left( 1 - \frac{N_{0}}{N} \right)\mathclose{}
    + 81 \frac{N_{1}}{N} \mathopen{}\left( 1 - \frac{N_{1}}{N} \right)\mathclose{}
    + 256 \frac{N_{2}}{N} \mathopen{}\left( 1 - \frac{N_{2}}{N} \right)\mathclose{}
    + 64 \frac{N_{3}}{N} \mathopen{}\left( 1 - \frac{N_{3}}{N} \right)\mathclose{} \\
    &\qquad\qquad + 18 \frac{N_{0}}{N} \frac{N_{1}}{N}
    - 32 \frac{N_{0}}{N} \frac{N_{2}}{N}
    + 16 \frac{N_{0}}{N} \frac{N_{3}}{N}
    + 288 \frac{N_{1}}{N} \frac{N_{2}}{N}
    - 144 \frac{N_{1}}{N} \frac{N_{3}}{N}
    + 256 \frac{N_{2}}{N} \frac{N_{3}}{N}
    \biggr]\mathclose{} .
\end{split}
\end{equation}
Again estimating $N_{k} / N$ by $K_{k} / K$, we now have an estimator for $\Var(\hat{S})$ given by
\begin{equation}
\begin{split}
    \frac{1}{\hbar^{2}} \hat{\sigma}(\hat{S})^{2}
    & = \frac{8}{3} \mathopen{}\left( \frac{l}{l_{p}} \right)^{4} \frac{N^{2}}{K} \frac{N - K}{N - 1} \mathopen{}\biggl[
    \frac{K_{0}}{K} \mathopen{}\left( 1 - \frac{K_{0}}{K} \right)\mathclose{}
    + 81 \frac{K_{1}}{K} \mathopen{}\left( 1 - \frac{K_{1}}{K} \right)\mathclose{}
    + 256 \frac{K_{2}}{K} \mathopen{}\left( 1 - \frac{K_{2}}{K} \right)\mathclose{}
    + 64 \frac{K_{3}}{K} \mathopen{}\left( 1 - \frac{K_{3}}{K} \right)\mathclose{} \\
    &\qquad + 18 \frac{K_{0}}{K} \frac{K_{1}}{K}
    - 32 \frac{K_{0}}{K} \frac{K_{2}}{K}
    + 16 \frac{K_{0}}{K} \frac{K_{3}}{K}
    + 288 \frac{K_{1}}{K} \frac{K_{2}}{K}
    - 144 \frac{K_{1}}{K} \frac{K_{3}}{K}
    + 256 \frac{K_{2}}{K} \frac{K_{3}}{K}
    \biggr]\mathclose{} .
\end{split}
\end{equation}
The square root of this gives an estimator for the standard error of the action
\begin{equation}
\label{eq:random_se_full}
\begin{split}
    \frac{1}{\hbar} \hat{\sigma}(\hat{S})
    & = \frac{4}{\sqrt{6}} \mathopen{}\left( \frac{l}{l_{p}} \right)^{2} \frac{N}{\sqrt{K}} \sqrt{\frac{N-K}{N-1}} \mathopen{}\biggl[
    \frac{K_{0}}{K} \mathopen{}\left( 1 - \frac{K_{0}}{K} \right)\mathclose{}
    + 81 \frac{K_{1}}{K} \mathopen{}\left( 1 - \frac{K_{1}}{K} \right)\mathclose{}
    + 256 \frac{K_{2}}{K} \mathopen{}\left( 1 - \frac{K_{2}}{K} \right)\mathclose{}
    + 64 \frac{K_{3}}{K} \mathopen{}\left( 1 - \frac{K_{3}}{K} \right)\mathclose{} \\
    &\qquad + 18 \frac{K_{0}}{K} \frac{K_{1}}{K}
    - 32 \frac{K_{0}}{K} \frac{K_{2}}{K}
    + 16 \frac{K_{0}}{K} \frac{K_{3}}{K}
    + 288 \frac{K_{1}}{K} \frac{K_{2}}{K}
    - 144 \frac{K_{1}}{K} \frac{K_{3}}{K}
    + 256 \frac{K_{2}}{K} \frac{K_{3}}{K}
    \biggr]^{\frac{1}{2}} .
\end{split}
\end{equation}

We may also exhibit a simpler bound on the standard error.
Due to the subadditivity of the standard deviation,
\begin{equation}
\begin{split}
    \frac{1}{\hbar} \sigma(\hat{S})
    & \leq \frac{4}{\sqrt{6}} \mathopen{}\left( \frac{l}{l_{p}} \right)^{2} \frac{N}{K} [\sigma(K_{0}) + 9 \sigma(K_{1}) + 16 \sigma(K_{2}) + 8 \sigma(K_{3})] \\
    & = \frac{4}{\sqrt{6}} \mathopen{}\left( \frac{l}{l_{p}} \right)^{2} \frac{N}{\sqrt{K}} \sqrt{\frac{N-K}{N-1}}
    \begin{aligned}[t]
        \mathopen{}\Biggl[ &
        \sqrt{\frac{N_{0}}{N} \mathopen{}\left( 1 - \frac{N_{0}}{N} \right)}
        + 9 \sqrt{\frac{N_{1}}{N} \mathopen{}\left( 1 - \frac{N_{1}}{N} \right)} \\
        & + 16 \sqrt{\frac{N_{2}}{N} \mathopen{}\left( 1 - \frac{N_{2}}{N} \right)}
        + 8 \sqrt{\frac{N_{3}}{N} \mathopen{}\left( 1 - \frac{N_{3}}{N} \right)}
        \Biggr]\mathclose{} .
    \end{aligned}
\end{split}
\end{equation}
Estimating $N_{k} / N$ by $K_{k} / K$ as usual gives an estimate for the standard error
\begin{equation}
\label{eq:random_se_subadditive_bound}
    \frac{1}{\hbar} \hat{\sigma}(\hat{S})
    \leq \frac{4}{\sqrt{6}} \mathopen{}\left( \frac{l}{l_{p}} \right)^{2} \frac{N}{\sqrt{K}} \sqrt{\frac{N-K}{N-1}}
    \begin{aligned}[t]
        \mathopen{}\Biggl[ &
        \sqrt{\frac{K_{0}}{K} \mathopen{}\left( 1 - \frac{K_{0}}{K} \right)}
        + 9 \sqrt{\frac{K_{1}}{K} \mathopen{}\left( 1 - \frac{K_{1}}{K} \right)} \\
        & + 16 \sqrt{\frac{K_{2}}{K} \mathopen{}\left( 1 - \frac{K_{2}}{K} \right)}
        + 8 \sqrt{\frac{K_{3}}{K} \mathopen{}\left( 1 - \frac{K_{3}}{K} \right)}
        \Biggr]\mathclose{} .
    \end{aligned}
\end{equation}
We also immediately have the bound
\begin{equation}
    \frac{1}{\hbar} \sigma(\hat{S})
    \leq \frac{68}{\sqrt{6}} \mathopen{}\left( \frac{l}{l_{p}} \right)^{2} \frac{N}{\sqrt{K}} \sqrt{\frac{N-K}{N-1}} .
\end{equation}

\printbibliography

\end{document}